\documentclass[12pt]{article}

\RequirePackage[OT1]{fontenc}
\usepackage{amsthm,amsmath,amssymb,bm,enumerate,xcolor,graphicx,natbib}
\RequirePackage[colorlinks,citecolor=blue,urlcolor=blue]{hyperref}
\usepackage{xr}

\renewcommand{\Pr}{\mbox{P}}
\newcommand{\E}{{\rm E}}
\newcommand{\Exp}{\mbox{Exp}}
\newcommand{\etaW}{\eta_W}
\newcommand{\etaX}{\eta_X}
\newcommand{\etaWh}{\eta_W(h)}
\newcommand{\bs}{\bm{s}}

\newcommand{\blind}{1}

\newtheorem{rmk}{Remark}
\newtheorem{prop}{Proposition}
\newtheorem{cor}{Corollary}
\newtheorem{lem}{Lemma}

\theoremstyle{definition}
\newtheorem{example}{Example}

\setlength{\oddsidemargin}{-0.125in} \setlength{\topmargin}{-0.5in}
\setlength{\textwidth}{6.5in} \setlength{\textheight}{9in}
\setlength{\textheight}{9in} \setlength{\textwidth}{6.5in}
\setlength{\topmargin}{-36pt} \setlength{\oddsidemargin}{0pt}
\setlength{\evensidemargin}{0pt} \tolerance=500

\pdfminorversion=4

\def\spacingset#1{\renewcommand{\baselinestretch}%
{#1}\small\normalsize} \spacingset{1}


\begin{document}

\if1\blind
{
  \title{\bf Modeling spatial processes with unknown extremal dependence class}
  \author{Rapha\"{e}l G. Huser \\
    CEMSE Division, King Abdullah University of Science and Technology\\
    and \\
   Jennifer L. Wadsworth
   \\
    Department of Mathematics and Statistics, Lancaster University
    }
  \maketitle
} \fi

\if0\blind
{
  \bigskip
  \bigskip
  \bigskip
  \begin{center}
    {\LARGE\bf Modeling spatial processes with unknown extremal dependence class}
\end{center}
  \medskip
} \fi

\bigskip
\begin{abstract}
Many environmental processes exhibit weakening spatial dependence as events become more extreme. Well-known limiting models, such as max-stable or generalized Pareto processes, cannot capture this, which can lead to a preference for models that exhibit a property known as asymptotic independence. However, weakening dependence does not automatically imply asymptotic independence, and whether the process is truly asymptotically (in)dependent is usually far from clear. The distinction is key as it can have a large impact upon extrapolation, i.e., the estimated probabilities of events more extreme than those observed. In this work, we present a single spatial model that is able to capture both dependence classes in a parsimonious manner, and with a smooth transition between the two cases. The model covers a wide range of possibilities from asymptotic independence through to complete dependence, and permits weakening dependence of extremes even under asymptotic dependence. Censored likelihood-based inference for the implied copula is feasible in moderate dimensions due to closed-form margins. The model is applied to oceanographic datasets with ambiguous true limiting dependence structure.
\end{abstract}

\noindent%
{\it Keywords:} asymptotic dependence and independence; censored likelihood inference; copula; threshold exceedance; spatial extremes.\\
\vfill

\newpage

\newpage

\baselineskip=26pt

\section{Introduction}
\label{sec:Introduction}
The statistical modeling of spatial extremes has received much attention since the article of \citet{Padoanetal10} provided a method of inference for \emph{max-stable processes}. The latter form an important class of models for spatial extremes, as they arise as the only non-degenerate limits of renormalized pointwise maxima of spatial stochastic processes. More precisely, let $Y_i(\bs)$, $i=1,2,\ldots$, be independent and identically distributed copies of a stochastic process  $\{Y(\bs):\bs\in\mathcal{S}\}$ with index set $\mathcal{S} \subset \mathbb{R}^2$. If there exist functions $a_n(\bs)>0, b_n(\bs)$ such that the limiting process
\begin{align}\label{MaxStable}
 M(\bs) = \lim_{n\to\infty} \max_{1\leq i \leq n}\frac{Y_i(\bs)-b_n(\bs)}{a_n(\bs)}
\end{align}
has non-degenerate marginals, then $M(\bs)$ is a max-stable process \citep[][Chapter 9]{deHaan.Ferreira:2006}. A practical issue with max-stable processes is that their $d$-dimensional densities (and hence the likelihood function) are difficult to evaluate, as the number of terms involved equals the $d$th Bell number, which grows super-exponentially with $d$. As such, spatial models for high threshold exceedances, which have simpler likelihoods, have become more appealing; see e.g., \citet{FerreiradeHaan14, WadsworthTawn14, Engelkeetal15, ThibaudOpitz15} and \citet{deFondevilleDavison16}. The threshold exceedance analogue of the max-stable process is known as the \emph{generalized Pareto process}, and has a similar asymptotic dependence structure in its joint tail region.

In order for limiting max-stable or generalized Pareto processes to provide good statistical models, we require that the extremes of $Y(\bm{s})$ are well represented by these processes, i.e., adequate convergence has occurred. However, there are no guarantees on rates of convergence, and in practice, limit models may not hold well. One way to assess the validity of convergence is to assess whether the stability properties of limit models hold well: max-stable copulas are invariant to the operation of taking pointwise maxima (max-stability), whilst generalized Pareto copulas are invariant to conditioning on threshold exceedances of higher levels (threshold-stability). A graphical diagnostic for max-stability is presented in \citet{Gabdaetal12}, whilst for threshold-stability, one can examine plots of
\begin{align}
 \chi_u := \Pr\{F_1(Y_1)>u,\ldots,F_d(Y_d)>u|F_1(Y_1)>u\},\qquad Y_j=Y(\bs_j) \sim F_j, \label{eq:chiu:intro}
\end{align}
where the argument $\bs_j$ denotes the $j$th spatial location; if the data follow a generalized Pareto process law, then this function should be constant as the quantile $u$ tends to one \citep{Rootzenetal16}. For environmental data in particular, it is much more common to see estimates of~\eqref{eq:chiu:intro} decreasing as $u\to 1$, indicating that dependence weakens with level of extremeness. An example of this is given in Figure~\ref{fig:waveheightchi}, for a dataset of significant wave heights, to be analyzed in \S\ref{sec:WaveHeight}.

\begin{figure}
 \centering
 \includegraphics[width=0.6\textwidth]{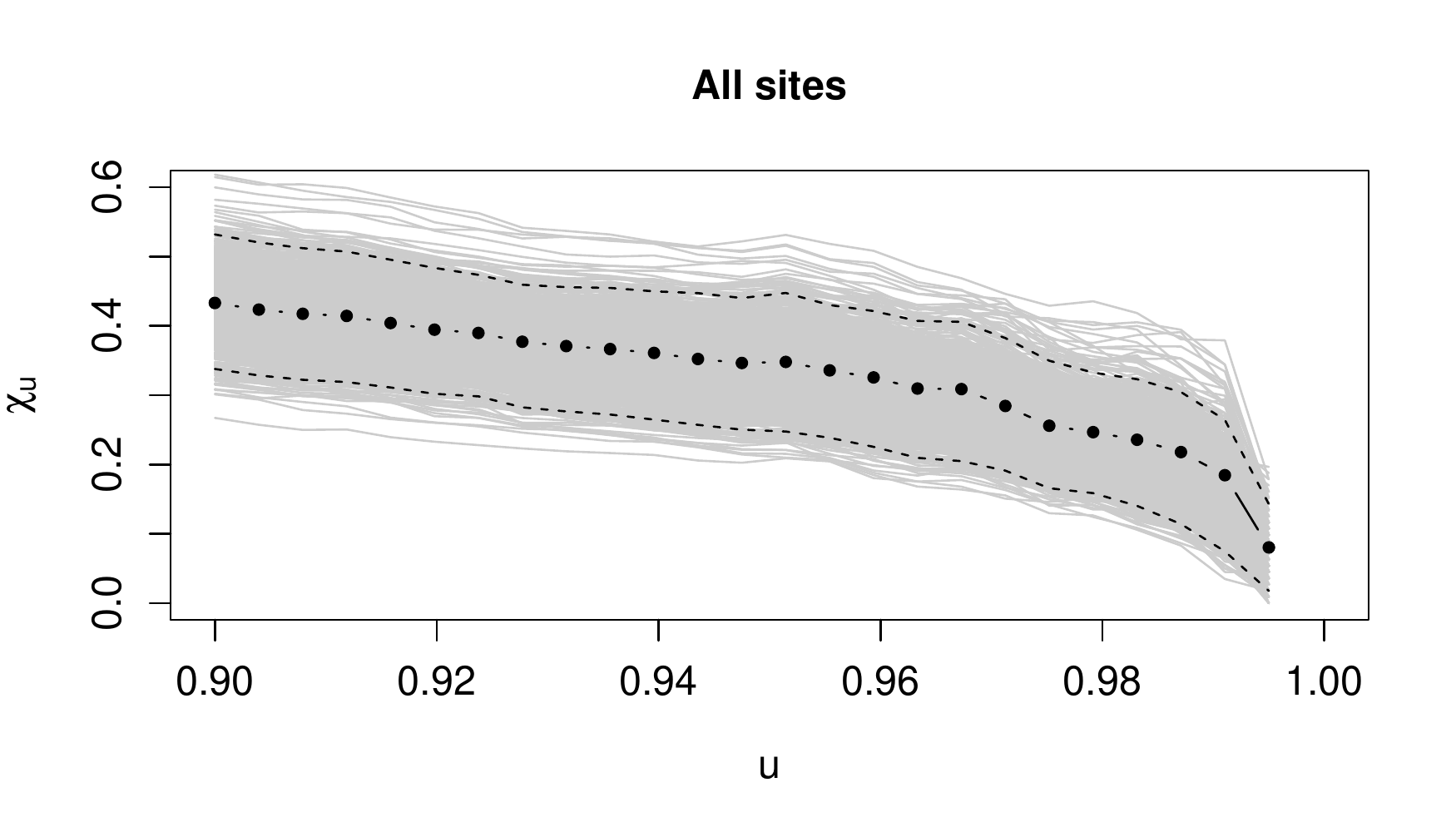}
 \label{fig:waveheightchi}
 \caption{Estimate of the dependence summary $\chi_u$ in \eqref{eq:chiu:intro} (dots) for the significant wave height data of \S\ref{sec:WaveHeight} plotted against quantile, $u$. Gray lines are estimates from a stationary bootstrap resampling procedure (see \S\ref{sec:WaveHeight}); dashed lines indicate the central 95\% of bootstrap samples, pointwise.}
\end{figure}

If the limit of $\chi_u$ defined in~\eqref{eq:chiu:intro} as $u\to1$ is positive for all sites $\bs_1,\ldots,\bs_d$ and all $d\geq2$, the process $Y(\bs)$ is termed \emph{asymptotically dependent}, and eventually, possibly at much higher levels, a generalized Pareto process should represent a suitable model for the data. If the limit is zero for all sites $\bs_1,\ldots,\bs_d$ and all $d\geq2$, we term the process \emph{asymptotically independent}; in such cases no generalized Pareto model would ever be suitable. Intermediate scenarios are possible, but owing to the structure of spatial data, it is common over small spatial domains to assume that the process is either asymptotically dependent or asymptotically independent, and we assume this here also. Determining a suitable model for the data usually requires distinguishing between these two scenarios, since most models exhibit only one type of dependence; choosing the incorrect class  will lead to unsuitable extrapolation into the joint upper tail \citep{LedfordTawn97, Davisonetal13}.

In practice, because asymptotic properties are always difficult to infer, it is ideal to fit spatial models encompassing both asymptotic dependence classes, and let the data speak for themselves. To our knowledge, the only instance in the literature of such a hybrid spatial extreme model is the max-mixture model of \citet{WadsworthTawn12}. However, in that model, asymptotic independence only occurs at a boundary point of the parameter space, thus inference methods allowing for this are non-regular. Moreover, the model is highly parametrized and requires pairwise likelihood fitting methods.
 
 In this paper, we address such deficiencies by presenting a class of spatial processes described by a small number of parameters and making a smooth transition between the two dependence paradigms. Specifically, we propose a novel class of spatial extremal models that have non-trivial asymptotically dependent and asymptotically independent submodels with the transition taking place in the interior of the parameter space. The latter property allows us to quantify our uncertainty about the dependence class in a simple manner. Our new spatial models can thus be viewed as similar in spirit to the generalized extreme-value (GEV) distribution in the univariate case, which was introduced by \citet{VonMises:1954} and \citet{Jenkinson55} as a three-parameter model combining the three limiting extreme-value types (i.e., reversed Weibull, Gumbel and Fr\'echet), hence providing a way to make inference without specifying the asymptotic distribution family prior to fitting the model. Furthermore, subject to model assumptions, standard hypothesis testing methods can be used to assess the evidence for asymptotic dependence over asymptotic independence, if so desired.

In encompassing both extremal dependence classes, our approach has similarities with the bivariate model of \citet{Wadsworthetal17}. However our construction here is simpler and substantially more amenable to higher-dimensional inference. Other related work that allows for both asymptotic dependence structures in a spatial setting is the Gaussian scale mixture models proposed in the recent work of \citet{Huseretal17}, but their models either make the transition at a boundary point of the parameter space, or are inflexible in their representation of asymptotic independence structures.

The paper is organized as follows. Section~\ref{sec:Model} describes the new spatial model and its extremal dependence properties. Section~\ref{sec:InferenceSimulation} details censored likelihood inference, describes a test for the asymptotic dependence class, and presents a simulation study validating the methodology. The new model is then applied to two oceanographic datasets in Section~\ref{sec:Oceanographic}, while Section~\ref{sec:Discussion} concludes with some discussion. All proofs are deferred to Appendix~\ref{app:proofs}.

\section{Model}
\label{sec:Model}
\subsection{Copula-based approach}
\label{sec:ModelIntroduction}

 The main goal of this work is to provide flexible extremal dependence structures for spatial processes. As such, we take a copula-based approach and seek the construction of flexible families of copulas for spatial extremal dependence. For a process with marginal distribution functions $X_j \sim F_j$, the $d$-dimensional copula function $C$, is defined as
\begin{align*}
 C(u_1,\ldots,u_d) &= \Pr\{F_1(X_1) \leq u_1, \ldots, F_d(X_d) \leq u_d\}.
\end{align*}
When the margins $F_j$, $j=1,\ldots,d$, are continuous, which will be the case throughout this paper, the copula is unique \citep{Sklar59}, and represents a multivariate distribution function with standard uniform margins. In \S\ref{sec:Construction}, we describe construction of a model whose copula displays interesting extremal dependence properties. Details of likelihood calculations of the copula of the model we introduce are presented in Section~\ref{sec:InferenceSimulation}.

\subsection{Construction}
\label{sec:Construction}
Let $\{W(\bs):\bs\in\mathcal{S}\subset \mathbb{R}^2\}$ be a stationary spatial process with standard Pareto margins, and displaying asymptotic independence with \emph{hidden regular variation}; a consequence of this is that for any $x\geq 1$,
\begin{align}
 \Pr\{W(\bs_j)>x\} &= x^{-1}, \notag\\
 \Pr\{W(\bs_j)>x, W(\bs_k)>x\} &= L_W(x) x^{-1/\eta_W(\bm{h})},~~~k\neq j, \label{eq:Whrv}
\end{align}
where $L_W:(0,\infty)\to(0,\infty)$ is slowly varying at infinity, i.e., ${L_W(ax)/L_W(x)\to1}$ as $x\to\infty$ for any $a>0$, and $0<\eta_W(\bm{h})<1$ for $\bm{h}=\bs_j-\bs_k\neq\bm{0}$ \citep{LedfordTawn96,Resnick02}. Note that we exclude the further possibility $\eta_W(\bm{h})=1$ ($\bm{h}\neq\bm{0}$), $L_W(x)\to 0$ as $x\to\infty$, because this does not arise in models that we might naturally consider for $W(\bs)$. The parameter $\eta_W(\bm{h})$, called the \emph{coefficient of tail dependence}, summarizes the joint tail decay of the process $W(\bs)$ and it is a function of the lag vector $\bm{h}$. For simplicity, in what follows we will restrict ourselves to isotropic processes, and will therefore write $\etaWh$ (or, for notational convenience, $\etaW$, when no confusion can arise), where $h=\|\bm{h}\|=\|\bs_j-\bs_k\|\geq0$ denotes the Euclidean distance between sites $\bs_j,\bs_k\in\mathcal{S}$. Examples of models satisfying \eqref{eq:Whrv} include marginally transformed Gaussian processes and inverted max-stable processes; see \S\ref{sec:ExampleModels} for more details.

With $W(\bs)$ as described, let $R$ be an independent standard Pareto random variable. Our spatial dependence model is defined through the random field constructed as
\begin{align}
X(\bs)=R^{\delta}W(\bs)^{1-\delta},~~~\delta\in[0,1]. \label{eq:model}
\end{align}
The following simple observation highlights why the parsimonious model defined in \eqref{eq:model} is potentially useful: when $\delta>1/2$ then $R^{\delta}$ is heavier-tailed than $W^{1-\delta}$ and this induces asymptotic dependence; when $\delta<1/2$ the converse is true, and this induces asymptotic independence. These facts are formalized in \S\ref{sec:Dependence}.

Construction~\eqref{eq:model} has superficial similarities with the Gaussian scale mixture models studied by \citet{Huseretal17}, who multiply a Gaussian random field by a random effect that determines the extremal dependence properties. However, in~\eqref{eq:model} the latent process $W(\bs)$ does not have Gaussian margins, resulting in a very different construction in practice, and need not have a Gaussian copula structure, which yields a much wider class of models. In practice, high-dimensional inference requires tractable densities for $W(\bs)$ (see \S\ref{sec:Likelihood}), leading to the Gaussian copula as a natural choice in spatial settings. Alternative possibilities for $W(\bs)$ are discussed further in \S\ref{sec:ExampleModels}.

\begin{rmk}
 Representation \eqref{eq:model} is convenient to study the asymptotic dependence properties of the process $X(\bs)$ using the theory of regular variation; see \S\ref{sec:Dependence}. However, as the copula structure is invariant with respect to monotone marginal transformations, there is an infinite number of ways to characterize the copula stemming from $X(\bs)$, some of which may be computationally more attractive or have appealing interpretations. For example, taking the logarithm on both sides of \eqref{eq:model}, we obtain an additive structure
 \begin{align}
  \tilde{X}(\bs) &:= \delta \tilde{R} + (1-\delta)\tilde{W}(\bs),\label{eq:model.additive}
 \end{align}
where $\tilde{R}:=\log(R) \sim \Exp(1)$ is independent of $\tilde{W}(\bs):=\log\{W(\bs)\}$, also with $\Exp(1)$ margins. In Sections~\ref{sec:InferenceSimulation} and \ref{sec:Oceanographic}, copula and likelihood computations are based on expression~\eqref{eq:model.additive}.
\end{rmk}

The variable $R$ in \eqref{eq:model} or equivalently the variable $\tilde{R}$ in \eqref{eq:model.additive}, may be interpreted in various ways, shedding light on the extremal behavior of $X(\bs)$. For example, by writing $\tilde{R}:=\{\tilde{R}(\bs):\bs\in\mathcal S\subset\mathbb R^2\}$, it can be seen as a random process indexed by $\mathcal S$ with perfect dependence, so the representation in \eqref{eq:model.additive} implies that $\tilde{X}(\bs)$ can be interpreted as a mixture between perfect dependence and asymptotic independence. This contrasts with \citet{ColesPauli02}, who constructed hybrid bivariate models using a certain type of mixture between asymptotic dependence and complete independence. 

Alternatively, $R$ or $\tilde{R}$ may be interpreted as an unobserved latent random factor impacting simultaneously the whole region $\mathcal S$, hence affecting the joint tail characteristics, and making a link with the common factor copula models for spatial data introduced by \citet{Krupskii.etal:2017}. One major difference with our approach, however, is that here $\tilde{R}$ and $\tilde{W}(\bs)$ are both on the unit exponential scale, whereas the location mixture copula models of \citet{Krupskii.etal:2017} assume that $\tilde{W}(\bs)$ is a Gaussian process and that both components in \eqref{eq:model.additive} are weighted equally, corresponding to $\delta=1/2$. Consequently, their exponential factor model always displays asymptotic dependence. Other distributions for the random factor were investigated in \citet{Krupskii.etal:2017}, but they all yield copulas with (non-trivial) asymptotic dependence lying on the boundary of, or at a single point in, the parameter space.

We next study the dependence properties of model~\eqref{eq:model} for $\delta\in(0,1)$ noting the simple interpretations at the endpoints of the parameter space: it is clear from~\eqref{eq:model} or~\eqref{eq:model.additive} that perfect dependence arises as $\delta\to1$, whilst the copula of $W(\bs)$ is recovered as $\delta\to0$. 

\subsection{Dependence properties}
\label{sec:Dependence}
Owing to the simple construction of this process, it is sufficient to study bivariate dependence to make more general conclusions. Comments on higher-dimensional dependence will be made throughout the remainder of the section.

To examine the dependence properties of the process~\eqref{eq:model}, we relate the behavior of the bivariate joint survivor function on the diagonal, $\Pr(X_j>x, X_k>x)$, to the marginal survivor function, $\Pr(X_j>x)$, where for simplicity we write $X_j=X(\bs_j)$ and so forth. We focus on a bivariate version of the dependence measure~\eqref{eq:chiu:intro},
\begin{align}\label{eq:chidef}
\chi_u(h):=\Pr\{F_j(X_j)>u \mid F_k(X_k)>u\},\qquad\mbox{and}\qquad X_j\sim F_j, X_k\sim F_k,
\end{align}
and its limit $\chi(h):= \lim_{u\to 1}\chi_u(h)$, with $h=\|\bs_j-\bs_k\|$. A value of $\chi(h)>0$ indicates asymptotic dependence for this pair of sites, whilst $\chi(h)=0$ defines asymptotic independence. Because the process $X(\bs)$ has common margins with upper endpoint at infinity, the limit may be equivalently expressed as
\begin{align}
 \chi(h)= \lim_{x\to \infty} \Pr(X_j>x, X_k>x) / \Pr(X_j>x). \label{eq:chih}
\end{align}
 When $\chi(h)=0$, alternative measures are needed to discriminate between the different levels of dependence exhibited by asymptotically independent distributions. A widely satisfied assumption, already made for the process $W(\bs)$ in \eqref{eq:Whrv} (modulo the restriction made on the coefficient of tail dependence), is
\begin{align}
 \Pr(X_j>x, X_k>x) = L_X\{\Pr(X_j>x)^{-1}\} \Pr(X_j>x)^{1/\etaX(h)}, \label{eq:etaX}
\end{align}
where $L_X$ is slowly varying at infinity, and $\etaX(h)\in(0,1]$ is the coefficient of tail dependence for the process $X(\bs)$. When $\etaX(h)=1$ and $L_X(x)\to \chi(h)>0$ as $x\to\infty$, the pair of variables $(X_j,X_k)^T$ are asymptotically dependent, else they are asymptotically independent and the value of $\etaX(h)$ summarizes the strength of extremal dependence in the joint upper tail. For notational convenience, the dependence on distance $h$ in $\chi(h)$ and $\eta_X(h)$ may be omitted when no confusion can arise.

\subsubsection{Marginal distribution}
\label{sec:Marginal}
The marginal distribution of the process~\eqref{eq:model} may be established for $\delta\neq1/2$ as follows:
\begin{align}
 1-F_{X}(x)=\Pr(X_j>x) &=\Pr\{W_j > x^{1/(1-\delta)}R^{-\delta/(1-\delta)}\}\notag\\
 &= \Pr(R>x^{1/\delta}) + x^{-1/(1-\delta)} \int_1^{x^{1/\delta}} r^{\delta/(1-\delta)-2} \,{\rm d}r \notag\\
 &= \frac{\delta}{2\delta-1} x^{-1/\delta} - \frac{1-\delta}{2\delta-1} x^{-1/(1-\delta)},\qquad x\geq 1, \delta\neq1/2. \label{eq:marg}
\end{align}
The case $\delta=1/2$ may either be established independently, or as a limit, from which we get
\begin{align*}
  1-F_{X}(x)=\Pr(X_j>x) = x^{-2}\{2\log(x) + 1\}\qquad x\geq 1, \delta=1/2;
\end{align*}
this is the survival function of a log-Gamma random variable with rate and shape parameters both equal to two. Notice that margins are here available in closed form, unlike the Gaussian scale mixture model of \citet{Huseretal17}, or the bivariate model of \citet{Wadsworthetal17}. Since the copula is the object of interest in all of the above cases, this makes model~\eqref{eq:model} computationally more appealing.

\subsubsection{Joint distribution}
\label{sec:Joint}
We now derive the joint survivor function of a pair of variables $(X_j,X_k)^T$ from the process $X(\bs)$ in \eqref{eq:model}, and then use this result in \eqref{eq:chih} and~\eqref{eq:etaX}, combined with \eqref{eq:marg}, to derive the corresponding coefficients $\chi$ and $\eta_X$ characterizing tail dependence of $X(\bs)$ depending on the value of $\delta$.
\begin{prop}
\label{prop:joint}
 With definitions and notation as established above, the joint survivor function of~\eqref{eq:model} satisfies
 \begin{align*}
  \Pr(X_j>x, X_k>x) &= \begin{cases}
                        L(x)x^{-1/\{(1-\delta)\etaW\}}, & \mbox{if } \etaW \geq \delta/(1-\delta),\\
                        \E\{\min(W_j,W_k)^{(1-\delta)/\delta}\}x^{-1/\delta}\{1+o(1)\} & \mbox{if } \etaW < \delta/(1-\delta),
                       \end{cases}
 \end{align*}
 where $L$ is slowly varying at infinity.
\end{prop}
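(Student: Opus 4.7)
The plan is to reduce the bivariate event to a product form involving $R$ and $M:=\min(W_j,W_k)$, obtain a closed-form identity via the independence of $R$, and then branch on the finiteness of a certain moment of $M$.

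First, since $X_j$ and $X_k$ share the common factor $R^\delta$, the joint event $\{X_j>x,X_k>x\}$ coincides with $\{R^\delta M^{1-\delta}>x\}$. Raising to the $1/\delta$ power and setting $N:=M^{(1-\delta)/\delta}$ and $y:=x^{1/\delta}$ recasts this as $\{RN>y\}$. Because $R$ is standard Pareto and independent of $W$, conditioning on $N$ and using $\Pr(R>r)=\min(1,r^{-1})$ yields the identity
\begin{equation*}
 \Pr(X_j>x,X_k>x)= y^{-1}\,\E\{\min(N,y)\} = x^{-1/\delta} \int_{0}^{x^{1/\delta}} \Pr(N>t)\,dt,
\end{equation*}
reducing the problem to asymptotic analysis of a single integral as $x\to\infty$.

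Next, I would split on whether $\E(N)=\E\{M^{(1-\delta)/\delta}\}$ is finite. The hypothesis $\Pr(M>m)=L_W(m)m^{-1/\etaW}$ makes the tail of $M$ regularly varying of index $-1/\etaW$, so $\E(N)<\infty$ iff $(1-\delta)/\delta<1/\etaW$, i.e.\ iff $\etaW<\delta/(1-\delta)$. In that regime, monotone convergence gives $\E\{\min(N,y)\}\to\E(N)$ as $y\to\infty$, producing the second case of the proposition. In the complementary regime $\etaW\geq\delta/(1-\delta)$, setting $\beta:=\delta/\{(1-\delta)\etaW\}\in(0,1]$, for $t\geq1$ one has $\Pr(N>t)=L_W(t^{\delta/(1-\delta)})t^{-\beta}$, a regularly varying function of index $-\beta$. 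Karamata's theorem then gives $\int_0^y\Pr(N>t)\,dt=\tilde L(y)\,y^{1-\beta}$ with $\tilde L$ slowly varying at infinity; dividing by $y$ and reverting via $x=y^\delta$, the exponent $-\beta/\delta$ collapses to $-1/\{(1-\delta)\etaW\}$, yielding the first case with $L(x):=\tilde L(x^{1/\delta})$.

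The main technical subtlety is the boundary case $\beta=1$, equivalently $\etaW=\delta/(1-\delta)$, where the index $1-\beta$ vanishes and Karamata must be applied in its slowly varying integral form to confirm that $\int_1^y t^{-1}L_W(t^{\delta/(1-\delta)})\,dt$ is itself slowly varying in $y$ (standard, e.g.\ Bingham--Goldie--Teugels Prop.~1.5.9a). Once that is in hand, both subcases $\beta<1$ and $\beta=1$ merge into the single statement ``$L$ slowly varying'' promised by the proposition.
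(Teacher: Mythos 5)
Your proof is correct and follows essentially the same route as the paper's: both arguments reduce the joint survivor function to a one-dimensional integral of the regularly varying tail of $\min(W_j,W_k)$ (your $x^{-1/\delta}\int_0^{x^{1/\delta}}\Pr(N>t)\,{\rm d}t$ is exactly the paper's integral after the substitution $s=t^{\delta/(1-\delta)}$) and then apply Karamata's theorem with the same dichotomy on whether the index of regular variation is $\geq -1$ or $<-1$. The only real difference is that you condition on $W$ rather than on $R$, which makes the constant $\E\{\min(W_j,W_k)^{(1-\delta)/\delta}\}$ in the second case appear directly by monotone convergence rather than via the separate tail-integral identity the paper invokes.
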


\begin{cor}
\label{cor:joint}
If $\delta > 1/2$, the pair $(X_j,X_k)^T$ is asymptotically dependent with
\begin{align}
 \chi &= \E\left\{\min(W_j,W_k)^{(1-\delta)/\delta}\right\}\frac{2\delta-1}{\delta} =\E\left(\min\left[\frac{W_j^{(1-\delta)/\delta}}{\E\{W_j^{(1-\delta)/\delta}\}}, \frac{W_k^{(1-\delta)/\delta}}{\E\{W_k^{(1-\delta)/\delta}\}}\right]\right)>0. \label{eq:chiW}
\end{align}
If $\delta \leq 1/2$, the pair $(X_j,X_k)^T$ is asymptotically independent, i.e., $\chi=0$. Furthermore, the coefficient of tail dependence for the process~\eqref{eq:model} is
\begin{align}
 \etaX &=\begin{cases}
	    1, & \mbox{if } \delta \geq 1/2,\\
           \delta/(1-\delta), & \mbox{if } \etaW / (1+\etaW) < \delta < 1/2,\\
           \etaW, & \mbox{if } \delta \leq \etaW / (1+\etaW).
          \end{cases}\label{eq:etaX2}
\end{align}
\end{cor}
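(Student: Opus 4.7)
The plan is to derive both the asymptotic dependence coefficient $\chi$ and the coefficient of tail dependence $\eta_X$ by directly combining Proposition~\ref{prop:joint} with the marginal expression~\eqref{eq:marg}, using their definitions in~\eqref{eq:chih} and~\eqref{eq:etaX}. The core of the argument is to compare the power-law exponents of the joint survivor function with those of the marginal.

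First, I would handle $\chi$. For $\delta>1/2$, one has $\delta/(1-\delta)>1\geq\eta_W$, so the assumption $\eta_W<\delta/(1-\delta)$ of Proposition~\ref{prop:joint} is automatic, giving $\Pr(X_j>x,X_k>x)\sim \E\{\min(W_j,W_k)^{(1-\delta)/\delta}\}x^{-1/\delta}$; meanwhile~\eqref{eq:marg} gives $\Pr(X_j>x)\sim \delta/(2\delta-1)\cdot x^{-1/\delta}$ since $1/\delta<1/(1-\delta)$. Dividing yields the first expression for $\chi$ in~\eqref{eq:chiW}. To obtain the second, I would use that a standard Pareto variable $W_j$ satisfies $\E\{W_j^{(1-\delta)/\delta}\}=\delta/(2\delta-1)$ (finite because $(1-\delta)/\delta<1$ when $\delta>1/2$), so $(2\delta-1)/\delta$ is exactly $1/\E\{W_j^{(1-\delta)/\delta}\}$; monotonicity of $w\mapsto w^{(1-\delta)/\delta}$ lets the power be moved inside the minimum, and stationarity gives the symmetric form. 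For $\delta=1/2$, the marginal is $\sim 2x^{-2}\log x$ while $\eta_W<1=\delta/(1-\delta)$, so the joint is $O(x^{-2})$ and the ratio vanishes. For $\delta<1/2$, the marginal behaves like $\frac{1-\delta}{1-2\delta}x^{-1/(1-\delta)}$; in either case of Proposition~\ref{prop:joint}, the joint exponent is strictly more negative than $-1/(1-\delta)$ (since $\eta_W<1$ implies $1/\{(1-\delta)\eta_W\}>1/(1-\delta)$, and $\delta<1/2$ implies $1/\delta>1/(1-\delta)$), so $\chi=0$.

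For $\eta_X$, the identity $1/\eta_X=\lim_{x\to\infty}\log\Pr(X_j>x,X_k>x)/\log\Pr(X_j>x)$ (equivalent to~\eqref{eq:etaX} up to the slowly varying factor) reduces the argument to comparing exponents. When $\delta\geq 1/2$, the ratio of exponents is $1$, giving $\eta_X=1$ (with $\chi>0$ for $\delta>1/2$ and $\chi=0$ but only logarithmic correction for $\delta=1/2$). When $\delta<1/2$, I would rewrite the threshold $\eta_W<\delta/(1-\delta)$ as $\delta>\eta_W/(1+\eta_W)$: in this regime the joint exponent is $-1/\delta$ and the marginal exponent is $-1/(1-\delta)$, so $1/\eta_X=(1-\delta)/\delta$, giving $\eta_X=\delta/(1-\delta)$. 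In the complementary regime $\delta\leq \eta_W/(1+\eta_W)$, the joint exponent is $-1/\{(1-\delta)\eta_W\}$ and division by $-1/(1-\delta)$ gives $1/\eta_X=1/\eta_W$, i.e., $\eta_X=\eta_W$.

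The only subtle step is the algebraic translation between the condition $\eta_W\lessgtr\delta/(1-\delta)$ used in Proposition~\ref{prop:joint} and the condition $\delta\lessgtr\eta_W/(1+\eta_W)$ appearing in~\eqref{eq:etaX2}; this is a one-line rearrangement but must be done carefully to match the three cases. The remaining obstacle, and the only one that is not pure bookkeeping, is verifying that the Pareto moment identity $\E\{W_j^{(1-\delta)/\delta}\}=\delta/(2\delta-1)$ produces exactly the prefactor appearing in the marginal~\eqref{eq:marg}, which is what allows the two equivalent expressions for $\chi$ in~\eqref{eq:chiW} to be identified; this is a direct integral but worth double-checking since it underlies the clean probabilistic interpretation.
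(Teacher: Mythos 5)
Your proposal is correct and follows essentially the same route as the paper: invoke Proposition~\ref{prop:joint} (noting that $\delta>1/2$ forces Case~2 since $\delta/(1-\delta)>1>\eta_W$), divide by the marginal tail from~\eqref{eq:marg} in each of the three regimes of $\delta$, and use the Pareto moment identity $\E\{W_j^{(1-\delta)/\delta}\}=\delta/(2\delta-1)$ to obtain the second form of $\chi$, with the exponent comparison and the rearrangement $\eta_W\lessgtr\delta/(1-\delta)\Leftrightarrow\delta\gtrless\eta_W/(1+\eta_W)$ giving~\eqref{eq:etaX2} exactly as in the paper's proof.
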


\begin{rmk}
 Analogous dependence summaries in $d$ dimensions are simple to establish using the same techniques of proof as for Proposition~\ref{prop:joint} and Corollary~\ref{cor:joint}. Specifically, letting $\etaX^{1:d}$ and $\etaW^{1:d}$ denote $d$-dimensional counterparts of the coefficient of tail dependence, defined using the $d$-dimensional joint survivor function, then expression~\eqref{eq:etaX2} still holds with $\etaX$ and $\etaW$ replaced by $\etaX^{1:d}$ and $\etaW^{1:d}$. The $d$-dimensional analogue of $\chi$ generalizes expression~\eqref{eq:chiW}, and is discussed in Remark~\ref{rmk:thetachi}.
\end{rmk}

The case $\delta=1/2$ is of particular interest, since it represents a boundary between asymptotic dependence and asymptotic independence: according to Corollary~\ref{cor:joint} we have asymptotic independence ($\chi=0$), but the coefficient of tail dependence $\etaX$ attains its boundary value of 1. In this case, we therefore have $L_X(x)\to0$ as $x\to\infty$ in \eqref{eq:etaX}. Furthermore, the model has the appealing property that $\chi \searrow 0$ as $\delta \searrow 1/2$ and $\etaX \searrow \etaW$ as $\delta\searrow \etaW/(1+\etaW)$. As noted at the end of \S\ref{sec:Construction}, as $\delta\searrow0$, the dependence structure of the $W$ process is recovered. Our model $X(\bs)$ in \eqref{eq:model} hence provides a smooth interpolation from the asymptotically independent submodel $W(\bs)$ and perfect dependence, as the parameter $\delta$ varies in the unit interval, and it transits through non-trivial asymptotically independent and asymptotically dependent submodels.

\subsection{Further dependence properties under asymptotic dependence}
\label{sec:FurtherPropertiesAD}

Here, we outline the connection to other well-known measures of dependence in the case of asymptotic dependence. We focus firstly on a limiting measure, namely the so-called \emph{exponent function}, $V:(0,\infty)^d \to (0,\infty)$ defined for all $x_1,\ldots,x_d>0$ by
\begin{align*}
 V(x_1,\ldots,x_d) = \lim_{t\to\infty} t\left(1-\Pr\left[X_1 \leq F_{X}^{-1}\left\{1-(tx_1)^{-1}\right\},\ldots,X_d \leq F_{X}^{-1}\left\{1-(tx_d)^{-1}\right\}\right]\right),
\end{align*}
which describes the joint dependence of the associated max-stable or generalized Pareto process; see \citet{Davison.etal:2012}, \citet{Cooley.etal:2012a}, \citet{Segers12} or \citet{Davison.Huser:2015} for recent reviews on max-stable models. We then examine the \emph{sub-asymptotic} behavior under asymptotic dependence, i.e., the mode of convergence towards such limiting structures, which is important in practice for modeling extreme events at observable levels.

\begin{prop}
\label{prop:chiV}
  For $X(\bs)$ as given in model~\eqref{eq:model} and $\delta>1/2$,
\begin{align*}
 V(x_1,\ldots,x_d) &=\E\left[\max_{j=1,\ldots,d} \frac{W_j^{(1-\delta)/\delta}}{\E\{W_j^{(1-\delta)/\delta}\}x_j}\right]=\frac{2\delta-1}{\delta}\E\left\{\max_{j=1,\ldots,d} \frac{W_j^{(1-\delta)/\delta}}{x_j}\right\}.
\end{align*}
\end{prop}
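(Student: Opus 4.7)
The plan is to exploit the multiplicative structure $X(\bs)=R^{\delta}W(\bs)^{1-\delta}$ by conditioning on the $W$-process and integrating out $R$ explicitly using its standard Pareto law, then passing to the limit via dominated convergence. First I set $Z_j := W_j^{(1-\delta)/\delta}$ and note that, since $\delta>1/2$ forces $(1-\delta)/\delta<1$,
\begin{align*}
\E\{Z_j\} = \int_1^{\infty} w^{(1-\delta)/\delta-2}\,\mathrm{d}w = \delta/(2\delta-1)<\infty.
\end{align*}
Setting $u_j(t) := F_X^{-1}\{1-(tx_j)^{-1}\}$, the marginal tail \eqref{eq:marg} gives $1-F_X(x)\sim\{\delta/(2\delta-1)\}x^{-1/\delta}$ as $x\to\infty$, so that $t/u_j(t)^{1/\delta}\to\{(2\delta-1)/\delta\}x_j^{-1}$ as $t\to\infty$.

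Since $X_j>u_j(t)$ is equivalent to $R\,Z_j > u_j(t)^{1/\delta}$, I set $M_t := \max_{j=1,\ldots,d} Z_j/u_j(t)^{1/\delta}$, which gives $\bigcup_{j=1}^{d}\{X_j>u_j(t)\} = \{RM_t>1\}$. Because $R$ is independent of $W$ and standard Pareto on $[1,\infty)$, conditioning on $W$ yields $\Pr(RM_t>1\mid W)=\min(1,M_t)$ (handled by splitting the cases $M_t\geq 1$ and $M_t<1$), and therefore
\begin{align*}
t\,\Pr\left[\bigcup_{j=1}^{d}\{X_j>u_j(t)\}\right] = t\,\E\{\min(1,M_t)\}.
\end{align*}

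For each realization of $W$, $u_j(t)^{1/\delta}\to\infty$ forces $M_t\to 0$, so that $t\min(1,M_t)=tM_t$ for all sufficiently large $t$ and
\begin{align*}
tM_t = \max_j \frac{t\,Z_j}{u_j(t)^{1/\delta}} \longrightarrow \frac{2\delta-1}{\delta}\max_j\frac{Z_j}{x_j}\qquad\text{almost surely.}
\end{align*}
For $t$ large enough, $t/u_j(t)^{1/\delta}\leq 2(2\delta-1)/(\delta x_j)$, which yields the integrable dominating function $tM_t\leq C\sum_{j=1}^{d}Z_j/x_j$. Dominated convergence then gives $V(x_1,\ldots,x_d)=\{(2\delta-1)/\delta\}\E\{\max_j W_j^{(1-\delta)/\delta}/x_j\}$, which is the second expression of the proposition; substituting $(2\delta-1)/\delta = 1/\E\{Z_j\}$ inside the maximum recovers the first expression.

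The only substantive obstacle is constructing the integrable dominating function; the remainder of the argument is a direct application of the tower property and dominated convergence. This structure also clarifies the role of $\delta>1/2$, since for $\delta\leq 1/2$ the moment $\E\{Z_j\}$ diverges, consistently with the asymptotic independence established in Corollary~\ref{cor:joint}.
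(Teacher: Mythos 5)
Your proof is correct and follows essentially the same route as the paper: both exploit the multiplicative structure to integrate out the standard Pareto variable $R$ (your $\E\{\min(1,M_t)\}$ is exactly the paper's $\tfrac{1}{t}\int_0^t \Pr(\cdot>z)\,{\rm d}z$ after the change of variables $z=tu$ with $U=1/R$), and both conclude by dominated convergence using the finiteness of $\E\{W_j^{(1-\delta)/\delta}\}$ for $\delta>1/2$. The only cosmetic differences are that you condition on $W$ rather than on $R$, and you use only the leading-order quantile asymptotics rather than the refined expansion of Lemma~\ref{lem:q}, which indeed suffices here.
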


\begin{rmk}
\label{rmk:thetachi}
 It follows from Proposition~\ref{prop:chiV} that the $d$-dimensional \emph{extremal coefficient}, $\theta^{1:d} = V(1,\ldots,1)$, is
\begin{align}
\theta^{1:d} &=\E\left[\max_{j=1,\ldots,d} \frac{W_j^{(1-\delta)/\delta}}{\E\{W_j^{(1-\delta)/\delta}\}}\right]=\frac{2\delta-1}{\delta}\E\left\{\max_{j=1,\ldots,d} W_j^{(1-\delta)/\delta}\right\}\in[1,d]. \label{eq:theta}
\end{align}
 Furthermore the $d$-dimensional version of $\chi$, $\chi^{1:d}$, in \eqref{eq:chidef} equals
\begin{align*}
\lim_{u\to 1} \Pr\{F_X(X_1)>u,\ldots,F_X(X_d)>u \mid F_X(X_1)>u\}=\frac{2\delta-1}{\delta}\E\left\{\min_{j=1,\ldots,d} W_j^{(1-\delta)/\delta}\right\},
\end{align*}
which can be demonstrated directly or by using inclusion-exclusion relationships between these parameters, such as outlined in \citet{Rootzenetal16}.
\end{rmk}

The behavior of $\chi_u-\chi = \Pr\{F_X(X_j)>u \mid F_X(X_k)>u\} - \chi$ as $u\to1$, i.e., the rate at which $\chi_u$ converges to its limit $\chi$, determines the flexibility of a process for capturing sub-asymptotic extremal dependence. Proposition~\ref{prop:chiconv} demonstrates that the parameterization of model~\eqref{eq:model} gives flexibility in this rate, meaning that dependence can weaken above the level used for fitting, whilst still allowing for the possibility of asymptotic dependence. 

\begin{prop}
\label{prop:chiconv}
 For $\delta>1/2$,
 \begin{align}
  \chi_u-\chi =   \chi \frac{1-\delta}{\delta}\left(\frac{\delta}{2\delta-1}\right)^{(1-2\delta)/(1-\delta)}(1-u)^{(2\delta-1)/(1-\delta)}\{1+o(1)\},\qquad u\to 1. \label{eq:chiconv}
 \end{align}
\end{prop}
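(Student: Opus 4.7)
The plan is to derive refined asymptotic expansions for both the joint and marginal survival functions of $X(\bs)$ evaluated at the threshold $x=F_X^{-1}(u)$, and then combine them to extract the leading order of $\chi_u-\chi$. Throughout I would work with
\[
\chi_u=\frac{\Pr(X_j>x,X_k>x)}{\Pr(X_j>x)},\qquad 1-u=\Pr(X_j>x),
\]
and expect to show that the rate in~\eqref{eq:chiconv} is driven by the second-order term $-\frac{1-\delta}{2\delta-1}x^{-1/(1-\delta)}$ already visible in~\eqref{eq:marg}, whereas the second-order correction to the joint survival hidden inside Proposition~\ref{prop:joint} turns out to be of strictly smaller order under the standing assumption $\etaW<1$.

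First I would rewrite~\eqref{eq:marg} in the factorised form
\[
\Pr(X_j>x)=\frac{\delta}{2\delta-1}x^{-1/\delta}\Bigl\{1-\frac{1-\delta}{\delta}x^{-(2\delta-1)/(\delta(1-\delta))}\Bigr\},
\]
which exposes the leading correction of order $x^{-(2\delta-1)/(\delta(1-\delta))}$. For the joint survival, I would revisit the integral used in the proof of Proposition~\ref{prop:joint}: conditioning on $R$ and substituting $v=x^{1/(1-\delta)}r^{-\delta/(1-\delta)}$ yields
\[
\Pr(X_j>x,X_k>x)=\frac{1-\delta}{\delta}x^{-1/\delta}\int_0^{x^{1/(1-\delta)}}\Pr\{\min(W_j,W_k)>v\}v^{(1-2\delta)/\delta}\,dv.
\]
Splitting this integral at $v=1$ and recognising the moment identity
\[
\E\{\min(W_j,W_k)^{(1-\delta)/\delta}\}=1+\frac{1-\delta}{\delta}\int_1^\infty\Pr\{\min(W_j,W_k)>v\}v^{(1-2\delta)/\delta}\,dv,
\]
I would obtain
\[
\Pr(X_j>x,X_k>x)=\E\{\min(W_j,W_k)^{(1-\delta)/\delta}\}x^{-1/\delta}-\frac{1-\delta}{\delta}x^{-1/\delta}R(x),
\]
where $R(x):=\int_{x^{1/(1-\delta)}}^\infty\Pr\{\min(W_j,W_k)>v\}v^{(1-2\delta)/\delta}\,dv$.

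Next I would prove $R(x)=o(x^{-(2\delta-1)/(\delta(1-\delta))})$. Inserting $\Pr\{\min(W_j,W_k)>v\}=L_W(v)v^{-1/\etaW}$ from~\eqref{eq:Whrv} and applying Karamata's theorem to the integrand $L_W(v)v^{-1/\etaW+(1-2\delta)/\delta}$, which is integrable at $\infty$ precisely because $\etaW<\delta/(1-\delta)$ in the asymptotically dependent regime, gives $R(x)\sim c\,L_W(x^{1/(1-\delta)})\,x^{1/\delta-1/(\etaW(1-\delta))}$ for an explicit constant $c$. Comparing with the target rate reduces the ratio $R(x)\,x^{(2\delta-1)/(\delta(1-\delta))}$ to a slowly varying factor times $x^{(1-1/\etaW)/(1-\delta)}$, which tends to $0$ because $\etaW<1$. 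Hence $R(x)$ only contributes to the $o(\cdot)$ term in what follows.

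Finally, using $\chi=\E\{\min(W_j,W_k)^{(1-\delta)/\delta}\}(2\delta-1)/\delta$, a direct expansion of the ratio defining $\chi_u$ yields
\[
\chi_u-\chi=\chi\,\frac{1-\delta}{\delta}\,x^{-(2\delta-1)/(\delta(1-\delta))}\{1+o(1)\},\qquad x\to\infty.
\]
To convert to a statement in $u$, I would use $x^{-1/\delta}=\tfrac{2\delta-1}{\delta}(1-u)\{1+o(1)\}$ from the leading order of the marginal, raise both sides to the power $(2\delta-1)/(1-\delta)$, and substitute to recover exactly the expression in~\eqref{eq:chiconv}. The main obstacle is the Karamata-type control of $R(x)$: one must verify integrability at infinity and carry the slowly varying factor $L_W$ rigorously through the exponent comparison. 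The strict inequality $\etaW<1$ that is built into the definition of hidden regular variation in~\eqref{eq:Whrv} is precisely what guarantees that the joint-survival remainder is negligible relative to the marginal second-order correction.
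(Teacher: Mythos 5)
Your proposal is correct and follows essentially the same route as the paper: both identify the second-order term of the marginal survival function (equivalently, of the quantile function in the paper's Lemma~\ref{lem:q}) as the source of the rate $(1-u)^{(2\delta-1)/(1-\delta)}$, and both show that the second-order correction to the joint survivor from Case~2 of Proposition~\ref{prop:joint} is of smaller order precisely because $\etaW<1$ --- your exponent comparison $R(x)x^{(2\delta-1)/\{\delta(1-\delta)\}}\to0$ is algebraically the same condition as the paper's $(2\delta-1)/(1-\delta)<\delta/\{\etaW(1-\delta)\}-1$. The only difference is bookkeeping: you expand the survival function to second order and invert the quantile only to first order at the end, whereas the paper expands the quantile function itself to second order; the substance is identical and your Karamata control of the remainder $R(x)$ is sound.
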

For comparison, generalized Pareto processes have $\chi_u \equiv \chi$ for all $u$ above a certain level \citep{Rootzenetal16}, whilst all max-stable processes have $\chi_u-\chi \asymp (1-u)$, as $u\to1$. However, as $\chi$ is a dependence measure on the scale of the observations rather than maxima, it is less useful in the context of max-stable processes, where the summary~\eqref{eq:theta} is typically used instead. From Proposition~\ref{prop:chiconv}, we observe a wide range of convergence rates, from very rapid for $\delta$ near $1$, to rates slower than $(1-u)$ for $1/2<\delta<2/3$. Note that for $\delta < 1/2$, the rate $\chi_u-\chi$ is determined by the coefficient of tail dependence, $\etaX$; recall \eqref{eq:etaX} and \eqref{eq:etaX2}.

Figure~\ref{fig:chi.eta} illustrates the flexibility in extremal dependence structures, by plotting $\chi_u$ in \eqref{eq:chidef} as a function of the threshold $u$ and the limit quantity $\chi=\lim_{u\to1}\chi_u$ in \eqref{eq:chiW}, for a range of values of $\delta\in(0,1)$, and $(W_j,W_k)^T$ following a Gaussian copula with correlation parameter $0.4$. Figure~\ref{fig:chi.eta} also displays the coefficient of tail dependence $\eta_X$ defined in \eqref{eq:etaX} and \eqref{eq:etaX2} as a function of $\delta$ for $\eta_W=0.1,\ldots,0.9$. The smooth transition from asymptotic dependence to asymptotic independence taking place at $\delta=1/2$ can be clearly seen from these two plots. Moreover, as is intuitive, the right panel of Figure~\ref{fig:chi.eta} shows that the process $X(\bs)$ in \eqref{eq:model} cannot reach lower levels of dependence than its underlying $W(\bs)$ process. 

\begin{figure}[t!]
\centering
\includegraphics[width=\linewidth]{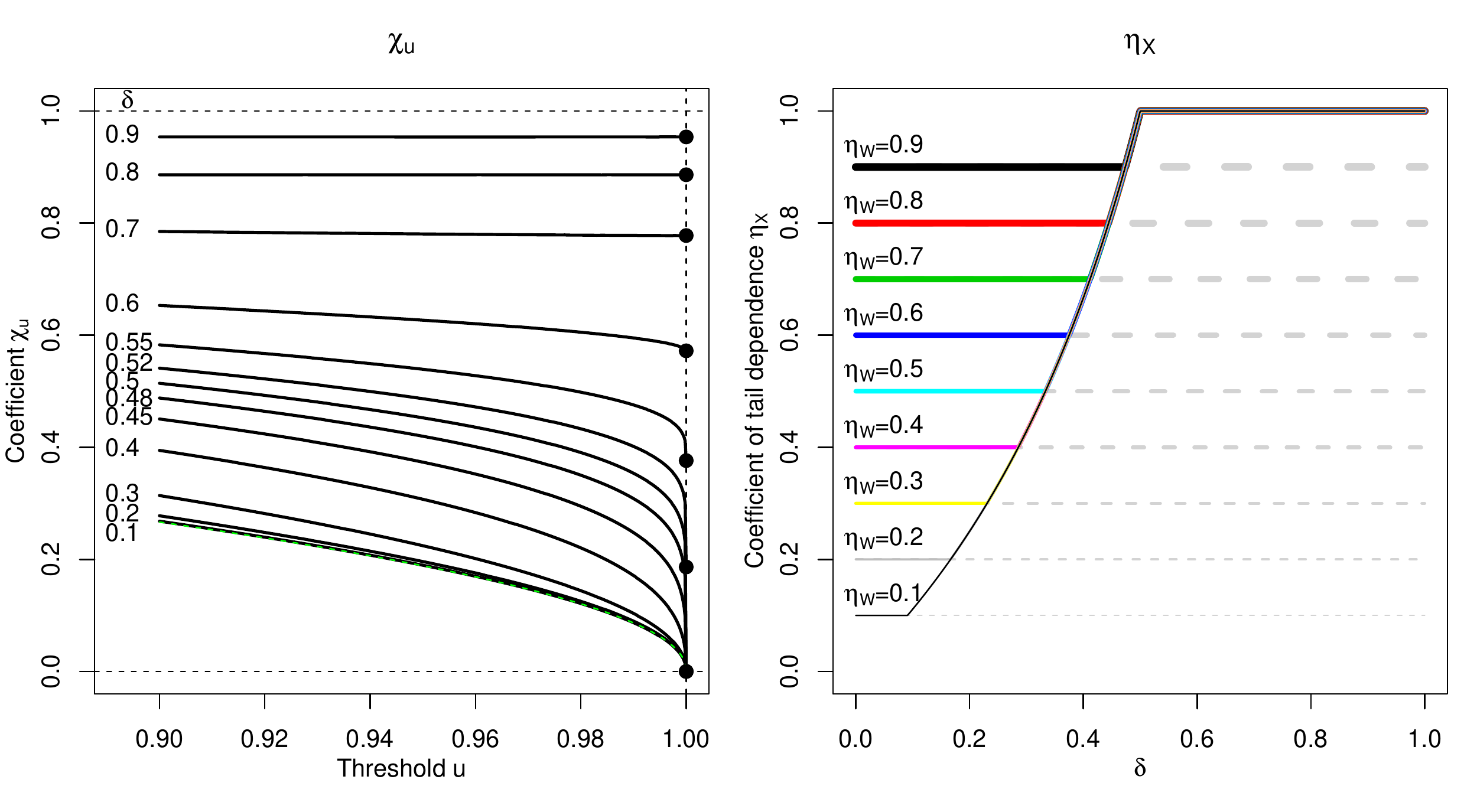}
\caption{\emph{Left:} Coefficient $\chi_u$ in \eqref{eq:chidef} for model \eqref{eq:model} plotted as a function of the threshold $u\in[0.9,1]$ for $\delta=0.1,0.2,0.3,0.4,0.45,0.48,0.5,0.52,0.55,0.6,0.7,0.8,0.9$ (top to bottom) and $(W_j,W_k)^T$ with a Gaussian copula with correlation $0.4$. Black dots at $u=1$ denote the limit quantities $\chi=\lim_{u\to1}\chi_u$, and the green dashed curve (superimposed with ${\delta=0.1}$) corresponds to the vector $(W_j,W_k)^T$, i.e., the case $\delta=0$. \emph{Right:} Coefficient of tail dependence $\eta_X$ in \eqref{eq:etaX2} (solid colored curves) as a function of $\delta\in(0,1)$ for $\eta_W=0.1,\ldots,0.9$ (thin to thick). The value $\eta_W=0.7$ corresponds to the choice of $(W_j,W_k)^T$ on the left panel.}\label{fig:chi.eta}
\end{figure}

\subsection{Example models}
\label{sec:ExampleModels}

We conclude this section with some concrete suggestions for the $W$ process that may be useful in certain applications, such as those described in Section~\ref{sec:Oceanographic}.

\begin{example}[Gaussian process]
\label{ex:WGaussian}
Let $\{Z(\bs):\bs\in\mathcal{S}\}$ be a stationary Gaussian process with correlation function $\rho(h)$, and standard Gaussian marginal distribution, denoted $\Phi$. Then
\[W(\bs) = 1/[1-\Phi\{Z(\bs)\}]\]
has a Gaussian copula, Pareto margins, and coefficient of tail dependence $\eta_W(h)=\{1+\rho(h)\}/2$. In this case, the value of $\chi(h)$ in~\eqref{eq:chiW} needs to be calculated either by Monte Carlo or numerical integration, both of which are simple and quick.
\end{example}

\begin{example}[Inverted max-stable process]
\label{ex:WinvertedMS}
Let $\{M(\bs):\bs\in\mathcal{S}\}$ be a stationary max-stable process with extremal coefficient function $\theta(h)\in(1,2]$, and marginal distribution functions $G_{\bs}$, $\bs\in\mathcal{S}$. Then the process
\[W(\bs) = 1/G_{\bs}\{M(\bs)\}\]
has an inverted max-stable copula \citep{LedfordTawn96,WadsworthTawn12}, Pareto margins, and coefficient of tail dependence $\eta_W(h)=1/\theta(h)$. The value of $\chi(h)$ in~\eqref{eq:chiW} can be calculated as
\begin{align}
 \chi(h) = \frac{2\delta-1}{\delta}\E\{\min(W_j,W_k)^{(1-\delta)/\delta}\} = \frac{2\delta-1}{1-(1-\delta)\{1+\etaW(h)\}},\quad\delta>1/2. \label{eq:chiIEV}
\end{align}
For this class of processes $\etaW(h) \in [1/2,1)$ for $h>0$, thus the range of $\chi(h)$ values can be established for each $\delta$. Moreover, the $d$-dimensional quantity $\chi^{1:d}$ takes the same form as in equation~\eqref{eq:chiIEV} with $\etaW(h)$ replaced by $\etaW^{1:d} \in[1/d,1)$.
\end{example}

In what follows, we will principally take $W(\bs)$ to have a Gaussian copula because the resulting density is much simpler in high dimensions than that of the inverted max-stable process, which suffers the same explosion in the number of terms as a max-stable process density. Pairwise or higher-dimensional composite likelihoods \citep[see, e.g.,][]{Padoanetal10,Varin.etal:2011,Castruccio.etal:2016} offer an alternative approach, but we do not explore this further here. Outside of a spatial context however, other dependence structures may be preferred. 

\begin{example}[Non-spatial model]
\label{sec:NonSpatial}
We remark that non-spatial use of the model~\eqref{eq:model} is also possible, replacing the process $W(\bs)$ with an asymptotically independent random vector $\bm{W}=(W_1,\ldots,W_d)^T$ with pairwise coefficients of tail dependence $\eta_{W}^{jk}<1$, $j<k\in\{1,\ldots,d\}$. For multivariate models in dimension $d$ greater than two some care is required, however, as model~\eqref{eq:model} allows only for $d$-wise asymptotic dependence (i.e., $\chi^{1:d}>0$), or $d$-wise asymptotic independence (i.e., $\chi^{jk}=0$, for all $j<k\in\{1,\ldots,d\}$). Such assumptions are natural in the context of spatial processes, but often less so for genuinely multivariate data. For dimension $d=2$ however, where $\chi^{1:2}>0$ is the complement of $\chi^{1:2}=0$, model~\eqref{eq:model} offers an interesting alternative to that of~\citet{Wadsworthetal17} for bivariate data. The latter show that the copula model defined by $\bm{X}=R\bm{W}$, where the radial variable $R$ follows a unit scale generalized Pareto distribution with shape parameter $\xi\in\mathbb R$ and $\max(\bm{W})=1$, with $R$ and $\bm{W}$ independent, displays asymptotic dependence for $\xi>0$ and asymptotic independence for $\xi\leq0$. One advantage of model~\eqref{eq:model} is that a version with an asymmetric dependence structure is simpler to implement, by selecting an asymmetric bivariate distribution for the copula of $\bm{W}$. We illustrate the improvement this can offer in \S\ref{sec:WaveSurge}.
\end{example}

\section{Inference and Simulation}
\label{sec:InferenceSimulation}

\subsection{Censored likelihood}
\label{sec:Likelihood}

We wish to fit the dependence structure of model~\eqref{eq:model} to the extremes of spatial processes. Since the dependence characteristics of the model are tailored towards appropriately capturing extremal dependence, we use a censored likelihood, which prevents low values from affecting the estimation of the extremal dependence structure. Such an approach is now standard in inference for multivariate and spatial extremes, although different censoring schemes have been adopted; see e.g.\ \citet{Smithetal97}, \citet{WadsworthTawn12} and \citet{Huseretal16,Huseretal17}. We assume that we are working with a $W$ process that has a density, so that this is also true for the copula.

Assume that $n$ independent replicates of a random process $\{Y(\bs):\bs\in\mathcal S\subset\mathbb R^2\}$ are observed at $d$ spatial locations, $\bs_1,\ldots,\bs_d\in\mathcal S$. Denote the $i$th replicate at the $j$th location by $Y_{ij}$, $i=1,\ldots,n$, $j=1,\ldots,d$. We assume that in its joint tail region, i.e., for observations above a high marginal threshold, the process $Y(\bs)$ has the same copula as our model $X(\bs)$ defined in \eqref{eq:model}, but with possibly different marginal distributions $F_{\bs}$. To estimate the dependence structure, we first transform the margins to uniform independently at each site $\bs_j$, $j=1,\ldots,d$. In Section~\ref{sec:Oceanographic}, we use the semi-parametric procedure of \citet{ColesTawn91}, whereby the distribution function is estimated using the asymptotically-motivated generalized Pareto distribution above a high marginal threshold, and the empirical distribution function below that threshold. The resulting variables are denoted $U_{ij} = \hat{F}_{\bs_j}(Y_{ij})$. An alternative is to use the empirical distribution function throughout as in \citet{Huseretal17}. This two-step approach is common practice in the copula literature and provides consistent inference for the copula under mild regularity conditions \citep[see, e.g.,][]{Joe:2015}.

The second step is to estimate the copula parameters using the transformed data based on a censored likelihood. When fitting the copula stemming from model~\eqref{eq:model}, the parameters to be estimated are $\boldsymbol{\psi}=(\delta,\boldsymbol{\psi}_W^T)^T\in\Psi={[0,1]\times\Psi_W}\subset\mathbb R^p$, where $\boldsymbol{\psi}_W$ is a $(p-1)$-dimensional vector of parameters describing the $W(\bs)$ process. Using the alternative representation $\tilde{X}(\bs)=\delta \tilde{R} + (1-\delta)\tilde{W}(\bs)$ in \eqref{eq:model.additive}, the resulting copula $C$ and its density $c$ are
\begin{align}
C(u_1,\ldots,u_d;\boldsymbol{\psi})&=F^{1:d}_{\tilde{X}}\{F_{\tilde{X}}^{-1}(u_1),\ldots,F_{\tilde{X}}^{-1}(u_d)\},\label{eq:CopulaCDF}\\
c(u_1,\ldots,u_d;\boldsymbol{\psi})&=f^{1:d}_{\tilde{X}}\{F_{\tilde{X}}^{-1}(u_1),\ldots,F_{\tilde{X}}^{-1}(u_d)\} \prod_{j=1}^d \left[f_{\tilde{X}}\{F_{\tilde{X}}^{-1}(u_j)\}\right]^{-1},\label{eq:CopulaPDF}
\end{align}
where $F_{\tilde{X}}(x)=F_{X}(e^x)$ and $f_{\tilde{X}}(x)=f_{X}(e^x)e^x$ with $f_{X}(x)={\rm d}F_{X}(x)/{\rm d}x$, easily obtained in closed form through \eqref{eq:marg}. The functions $F_{\tilde{X}}$ and $f_{\tilde{X}}$ are the marginal distribution and density, respectively, stemming from the $\tilde{X}(\bs)$ process observed at the sites $\bs_1,\ldots,\bs_d$, whilst
\begin{align*}
F^{1:d}_{\tilde{X}}(x_1,\ldots,x_d)&=\int_0^{r^\star_\delta}F_{\tilde{W}}\left\{(1-\delta)^{-1}(x_1-\delta r),\ldots,(1-\delta)^{-1}(x_d-\delta r)\right\}e^{-r}{\rm d}r,\\
f^{1:d}_{\tilde{X}}(x_1,\ldots,x_d)&=(1-\delta)^{-d}\int_0^{r^\star_\delta}f_{\tilde{W}}\left\{(1-\delta)^{-1}(x_1-\delta r),\ldots,(1-\delta)^{-1}(x_d-\delta r)\right\}e^{-r}{\rm d}r
\end{align*}
represent the joint distribution function and density, respectively, of this process. Here, $r^\star_\delta=\min(x_1,\ldots,x_d)/\delta$, and $F_{\tilde{W}},f_{\tilde{W}}$ denote the joint distribution and density, respectively, for the $\tilde{W}(\bs)$ process. The partial derivatives of the copula $C(u_1,\ldots,u_d;\boldsymbol{\psi})$ with respect to any set of variables $\mathcal J\subset\{1,\ldots,d\}$ of cardinality $d_{\mathcal J}$ may be expressed as
\begin{align}
C_{\mathcal J}(u_1,\ldots,u_d;\boldsymbol{\psi})&=\frac{\partial^{d_{\mathcal J}}}{\prod_{j\in\mathcal J}\partial u_j}C(u_1,\ldots,u_d;\boldsymbol{\psi})\notag\\
&=F^{1:d}_{\tilde{X},\mathcal J}\{F_{\tilde{X}}^{-1}(u_1),\ldots,F_{\tilde{X}}^{-1}(u_d)\} \prod_{j\in\mathcal J} \left[f_{\tilde{X}}\{F_{\tilde{X}}^{-1}(u_j)\}\right]^{-1},\label{eq:PartialCopula}
\end{align}
where
\begin{align*}
F^{1:d}_{\tilde{X},\mathcal J}(x_1,\ldots,x_d)&=(1-\delta)^{-d_{\mathcal J}}\int_0^{r^\star_\delta}F_{\tilde{W},\mathcal J}\left\{(1-\delta)^{-1}(x_1-\delta r),\ldots,(1-\delta)^{-1}(x_d-\delta r)\right\}e^{-r}{\rm d}r,
\end{align*}
with $F_{\tilde{W},\mathcal J}(x_1,\ldots,x_d)=\partial^{d_{\mathcal J}} F_{\tilde{W}}(x_1,\ldots,x_d)/\prod_{j\in\mathcal J}\partial x_j$. When the process $W(\bs)$ is based on a Gaussian copula, partial derivatives in \eqref{eq:PartialCopula} involve the multivariate Gaussian distribution in dimension $d-d_{\mathcal J}$. Although the unidimensional integrals appearing in \eqref{eq:CopulaCDF}, \eqref{eq:CopulaPDF} and \eqref{eq:PartialCopula} cannot be expressed in closed form, they can nevertheless be accurately approximated using standard finite integration or (quasi) Monte Carlo methods. To estimate the parameters $\boldsymbol{\psi}=(\delta,\boldsymbol{\psi}_W^T)^T\in\Psi$, while avoiding influence of non-extreme data below high marginal thresholds $u^\star_1,\ldots,u^\star_d$, we maximize the censored log likelihood function defined as
\begin{align}
\ell(\boldsymbol{\psi})&=\sum_{i=1}^n \log \{L_i(\boldsymbol{\psi})\},\label{eq:LogLikelihood}
\end{align}
with contributions defined through the sets of indices $\mathcal J_i=\{j:U_{ij}>u_j^\star\}\subseteq\{1,\ldots,d\}$ as
\begin{align*}
L_i(\boldsymbol{\psi})&=\begin{cases}
C(u_1^\star,\ldots,u_d^\star;\boldsymbol{\psi}), & \mathcal J_i=\emptyset,\\
c(U_{i1},\ldots,U_{id};\boldsymbol{\psi}), & \mathcal J_i=\{1,\ldots,d\},\\
C_{\mathcal J_i}\{\max(U_{i1},u_1^\star),\ldots,\max(U_{id},u_d^\star);\boldsymbol{\psi}\}, & \mbox{otherwise.}
\end{cases}
\end{align*}

The set $\mathcal J_i$ determines whether the $i$th observation vector $(U_{i1},\ldots,U_{id})^T$ has threshold exceedances in no, all, or some but not all components, respectively; therefore, these sets may be different for each likelihood contribution $i=1,\ldots,n$. The estimator maximizing \eqref{eq:LogLikelihood} over $\Psi$ is denoted by $\hat{\boldsymbol{\psi}}$. The performance of this inference approach is assessed in our simulation study \S\ref{sec:Simulation} and it is used in the application in \S\ref{sec:WaveHeight}.

Another possible censoring scheme is to use either the fully censored contribution $C(u_1^\star,\ldots,u_d^\star;\boldsymbol{\psi})$ in \eqref{eq:LogLikelihood} if $\mathcal J_i=\emptyset$ (i.e., the variable $U_{ij}$ is lower than the threshold $u_j^\star$ for all $j=1,\ldots,d$), or the completely uncensored contribution $c(U_{i1},\ldots,U_{id};\boldsymbol{\psi})$ otherwise. This was used by \citet{WadsworthTawn12}, \citet{Opitz16} and \citet{Wadsworthetal17}, and is adopted in the example of \S\ref{sec:WaveSurge}, where we compare fits of bivariate models.

\subsection{Simulation study}
\label{sec:Simulation}
\subsubsection{Parameter estimation}
\label{sec:ParameterEstimation}
To assess the performance of the maximum censored likelihood estimator $\hat{\boldsymbol{\psi}}$ defined through \eqref{eq:LogLikelihood}, we simulated data from the copula defined by \eqref{eq:model} at $d=2, 5,10,15$ locations uniformly generated in the unit square, $[0,1]^2$. We sampled $n=1000$ independent replicates at these locations, and considered the scenarios $\delta=0.1,\ldots,0.9$ (from asymptotic independence to dependence) with $W(\bs)$ defined by a Gaussian copula structure with powered exponential correlation function $\rho(\bs_j,\bs_k)=\exp\{-(\|\bs_j-\bs_k\|/\lambda)^\nu\}$, $\lambda>0,\nu\in(0,2]$. Setting $\lambda=0.5$ and $\nu=1$, we then estimated $\boldsymbol{\psi}=(\delta,\lambda,\nu)^T$ by maximizing \eqref{eq:LogLikelihood} with marginal thresholds $u_1^\star=\cdots=u_d^\star=0.95$, giving about $50$ exceedances at each location. For identifiability reasons, we fixed $\nu=1$ when $d=2$. Because the process is almost perfectly dependent when $\delta$ approaches unity, this creates numerical difficulties: to deal with this issue, we increase the relative precision in the {\tt R} function {\tt integrate} used for the calculation of the integrals in \eqref{eq:CopulaCDF}, \eqref{eq:CopulaPDF} and \eqref{eq:PartialCopula} for larger values of $\delta$. Figure~\ref{fig:results1} shows boxplots of estimated parameters $\hat\delta$ based on $100$ independent simulations. 

\begin{figure}[t!]
\centering
\includegraphics[width=\linewidth]{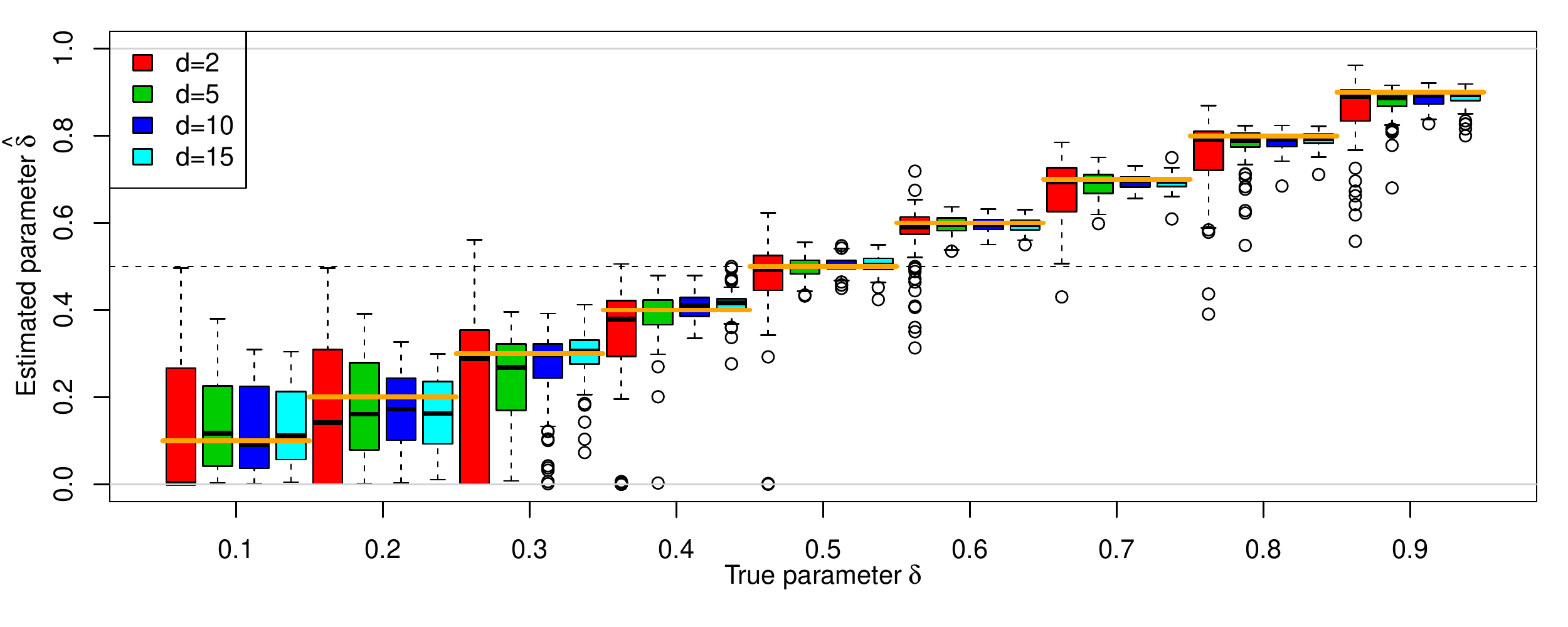}
\caption{Boxplots for the MLE $\hat\delta$ based on \eqref{eq:LogLikelihood} setting the thresholds as $u^\star_j=0.95$, $j=1,\ldots,d$, for model~\eqref{eq:model} with $\delta=0.1,\ldots,0.9$ (left to right). The underlying process $W(\bs)$ has a Gaussian copula with correlation function $\rho(\bs_j,\bs_k)=\exp\{-(\|\bs_j-\bs_k\|/\lambda)^\nu\}$ and $\lambda=0.5$, $\nu=1$. The process was observed at $d=2$ (red, left), $5$ (green, second left), $10$ (dark blue, second right) and $15$ (light blue, right) random locations in $[0,1]^2$, and $n=1000$ replicates were simulated. True values are indicated by the orange horizontal segments, and the boundary between asymptotic dependence and independence is indicated by the horizontal dashed line at $\delta=0.5$. Boxplots are based on $100$ independent simulations.}\label{fig:results1}
\end{figure}

Overall, the estimation procedure works as expected, with boxplots for $\delta$ approximately centered around the true value, though a small bias appears for $\delta=0.8,0.9$, which is due to numerical instabilities and difficulties in identifying all three parameters in such strong dependence scenarios, despite the higher numerical precision; recall Figure~\ref{fig:chi.eta}. As is typical for a bounded parameter, the asymptotic normality of $\hat\delta$ looks to hold well when $\delta$ is not too close to $0$ and $1$, but the distribution displays some asymmetry near to the endpoints $0$ and $1$. Estimation seems to be easier when $\delta\approx1/2$, which leads to small bias and variability. As $\delta\to0$, the copula structure of $X(\bs)$ converges to that of the latent process $W(\bs)$, here a Gaussian copula, and therefore low values such as $\delta=0.1,0.2$ yield very similar dependence structures, leading to higher variability.
Boxplots of $\hat\lambda$ and $\hat\nu$ (see Supplementary Material) suggest that results are better in the asymptotic independence case when $\delta\leq0.5$. For larger values of $\delta$, the range $\lambda$ is more variable and the smoothness parameter $\nu$ is slightly more biased, owing to the very strong dependence. However, in practice, one could restrict the parameter $\delta$ to $0\leq \delta\leq 0.8$, say, as $\delta>0.8$ is very unlikely to occur in applications. For all parameters $\delta$, $\lambda$ and $\nu$, but particularly for $\nu$, the fit improves significantly when more locations are available.

\subsubsection{Testing the dependence class}
A major advantage of model~\eqref{eq:model} over currently available models for spatial extremes is that we do not need to explicitly determine whether the data exhibit asymptotic dependence or asymptotic independence in order to select an appropriate class of models. However, since so much effort has previously been placed on determining the appropriate dependence class, we present the details and simulation experiments of a model-based test for this here. \citet{Coles.etal:1999} suggest using nonparametric estimators of the measure $\chi_u$ (defined slightly differently to \eqref{eq:chidef}) and its counterpart $\bar\chi_u$, but when the threshold $u$ increases to unity, the associated uncertainty inflates dramatically. This renders any test based on these nonparametric estimators almost useless in practice. To increase the power for discriminating between asymptotic dependence and independence, a parametric model-based approach seems sensible and our copula model \eqref{eq:model} provides a very natural way to proceed, because the transition between the two asymptotic paradigms takes place in the interior of the parameter space. We stress, however, that the validity of such a test is reliant on modeling assumptions, and as such is best used in conjunction with other diagnostics. Standard likelihood theory can be invoked to design tests for the null hypotheses
\begin{align*}
H_0^{\rm AD}&:\delta>1/2\mbox{ (asympt.\ dependence)}\quad &\mbox{\emph{vs}}&&\quad H_A^{\rm AI}&:\delta\leq 1/2\mbox{ (asympt.\ independence)};\\
H_0^{\rm AI}&:\delta\leq 1/2\mbox{ (asympt.\ independence)}\quad &\mbox{\emph{vs}}&&\quad H_A^{\rm AD}&:\delta>1/2\mbox{ (asympt.\ dependence)}.
\end{align*}
Let $\hat{\boldsymbol{\psi}}=(\hat\delta,\hat\lambda,\hat\nu)^T$ be the maximum likelihood estimator (MLE). We suggest using asymptotic normality of $\hat{\boldsymbol{\psi}}$ to test for $H_0^{\rm AD}$ or $H_0^{\rm AI}$, an assumption that should hold true if $n$ is large and $\delta$ is not too close to its boundaries $0$ and $1$. In particular, denoting the estimated variance of $\hat\delta$ by $\hat v_\delta$, the power of these tests at the level $100\times(1-\alpha)\%$ can be computed as 
\begin{align}
\Pr(\mbox{reject $H_0^{\rm AD}$}\mid \mbox{$H_A^{\rm AI}$ holds})&=\Pr(\hat\delta<0.5-\sqrt{\hat v_\delta}z_{1-\alpha}\mid \delta\leq 0.5),\label{eq:powerAD}\\
\Pr(\mbox{reject $H_0^{\rm AI}$}\mid \mbox{$H_A^{\rm AD}$ holds})&=\Pr(\hat\delta>0.5+\sqrt{\hat v_\delta}z_{1-\alpha}\mid \delta> 0.5),\label{eq:powerAI}
\end{align}
respectively, where $z_{1-\alpha}$ is the $(1-\alpha)$-quantile of the standard normal distribution. 

To compute the power curves \eqref{eq:powerAD} and \eqref{eq:powerAI}, we drew $300$ simulations of model \eqref{eq:model} at $d=2,5,10$ locations in $[0,1]^2$ with $n=1000,2000$ independent replicates, under the same setting as \S\ref{sec:ParameterEstimation}. Range $\lambda=0.5$ and smoothness $\nu=1$ were fixed, and we considered a sequence $\delta\in[0.3,0.8]$ in steps of $0.02$, estimating all parameters using the MLE based on \eqref{eq:LogLikelihood} with marginal thresholds $u_1^\star=\cdots=u_d^\star=0.95$. The Hessian matrix at the MLE was used in order to compute $\hat v_\delta$ as the $(1,1)$-entry of the reciprocal Fisher information.

\begin{figure}[t!]
\centering
\includegraphics[width=0.7\linewidth]{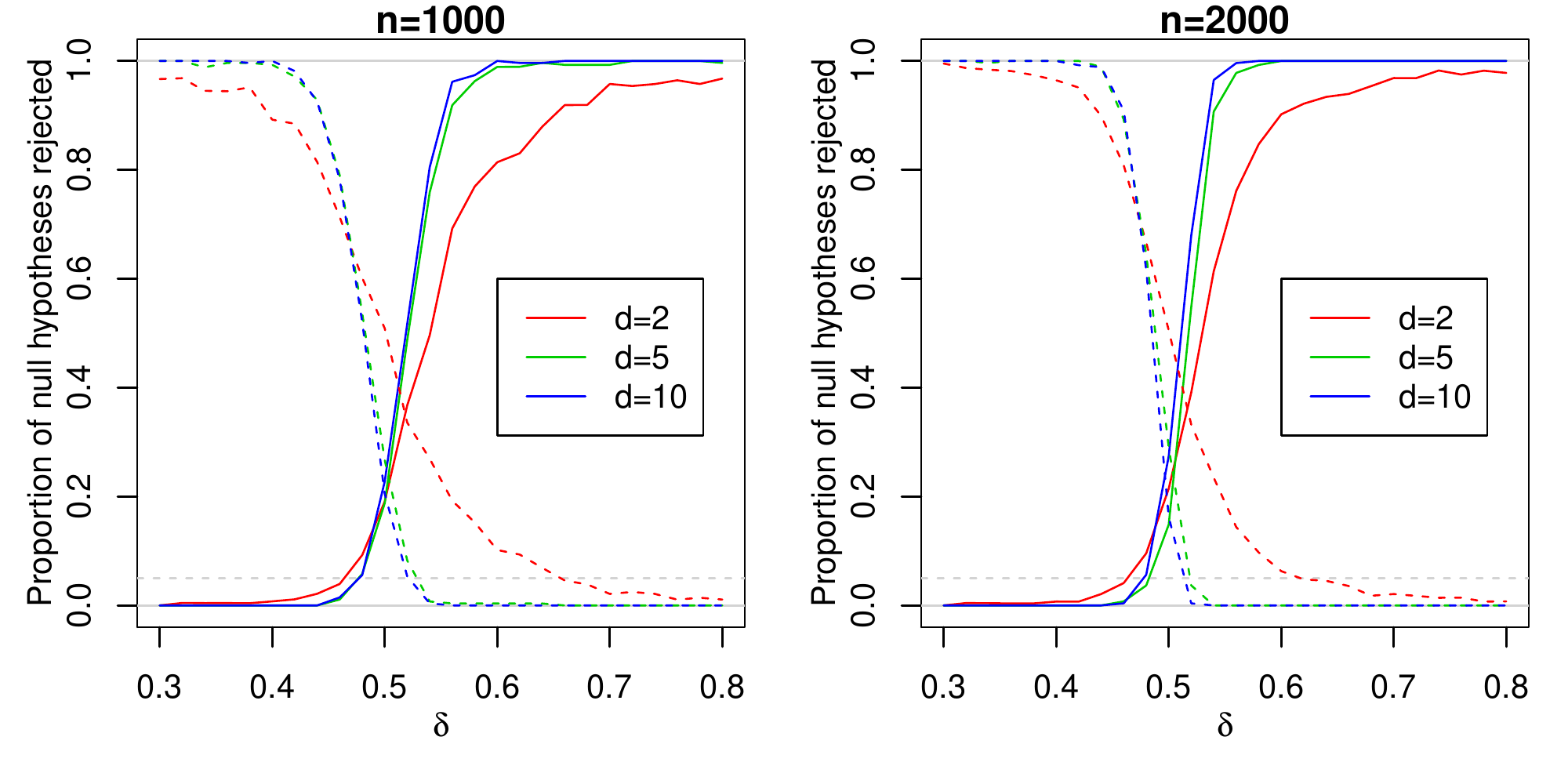}
\caption{Proportion of times the null hypotheses $H_0^{\rm AD}$ (dashed) and $H_0^{\rm AI}$ (solid) are rejected, plotted against the true $\delta$ value, computed from $300$ simulations. The horizontal dashed line at $0.05$ shows the nominal level used for the tests. The model~\eqref{eq:model} was simulated at $d=2$ (red), $5$ (green), and $10$ (blue) sites in $[0,1]^2$ with $n=1000$ (left), $2000$ (right) replicates, using an underlying Gaussian copula for $W(\bs)$ with correlation function $\rho(\bs_j,\bs_k)=\exp\{-(\|\bs_j-\bs_k\|/\lambda)^\nu\}$ and $\lambda=0.5$, $\nu=1$. Parameters were estimated by maximum likelihood based on \eqref{eq:LogLikelihood} with marginal thresholds $u_1^\star=\cdots=u_d^\star=0.95$.}\label{fig:powers}
\end{figure}
Figure~\ref{fig:powers} displays the proportion of null hypotheses rejected (i.e., the power curves \eqref{eq:powerAD} and \eqref{eq:powerAI} when the corresponding null hypotheses are false), estimated using the $300$ simulations and plotted as a function of $\delta\in[0.3,0.8]$. As expected, for all dimensions, the power to reject asymptotic dependence (respectively asymptotic independence) improves as $\delta\to0$ (respectively $\delta\to1$), and with higher dimensions, although there is little difference between $d=5$ and $d=10$. Comparing left and right panels, increased sample size also improves power, with a steeper transition around $\delta=1/2$. The departure from nominal levels for the Type I error however suggests that the Hessian may not give a good representation of the asymptotic variance, possibly owing to numerical approximations. In Section~\ref{sec:WaveHeight} we suggest using bootstrap methods to calculate uncertainty.

\section{Oceanographic applications}
\label{sec:Oceanographic}
\subsection{Hindcast significant wave height data}
\label{sec:WaveHeight}

\citet{WadsworthTawn12} considered modeling the extremes of the winter observations of a hindcast dataset of significant wave height, a measure of ocean energy, from the North Sea. Calculating the coefficient of tail dependence $\eta(h)$ for the wave height process, they suggested that there was evidence for asymptotic independence of the process, although strong spatial dependence between sites. Figure~\ref{fig:waveheightchi} suggests a high degree of ambiguity in what the appropriate extremal dependence structure should be, since the summary $\chi_u$ is decreasing as $u$ increases, but not necessarily to a value of zero. This ambiguous situation is replicated throughout numerous applications, and demonstrates the necessity for a model such as model~\eqref{eq:model} that can handle both scenarios.

Measurements of the hindcast are recorded at three-hourly intervals, yielding eight observations per day, over a period of $31$ years. In total the dataset of winter (December, January, February) wave heights consists of  $22376$ observations at $50$ locations. Margins are transformed to uniform using the semiparametric transformation of \citet{ColesTawn91}. The data are strongly temporally dependent and so we subsample to extract one realization per day, giving $2797$ observations. The resulting data still exhibit temporal dependence, but this thinning eases the computational burden of model fitting, whilst the information loss should be small. Finally, we select a subset of $20$ sites to fit the model to, whilst using all data for validation of the fit. Distance is measured in units of latitude (one unit $\approx 111$km); the range of distances between sites is $0.27$--$2.99$ units.

Model~\eqref{eq:model} was fitted by maximum likelihood based on \eqref{eq:LogLikelihood} with thresholds $u_1^\star=\cdots=u_d^\star=0.95$, assuming a Gaussian copula for the $W$ process (Example~\ref{ex:WGaussian}); Table~\ref{tab:Hs} reports the results. The uncertainty measures are based on 200 bootstrap samples, created using the stationary bootstrap \citep{PolitisRomano94}. This procedure relies on sampling blocks of geometric length; we sampled using an average length of 14 days, although any blocks that reached the end of February (i.e., the end of one winter) were curtailed, so that observations within a block are always consecutive. Figure~\ref{fig:Theta} in the Supplementary Material shows that this bootstrap procedure captures the temporal dependence in the extremes adequately.

The MLE of $\delta$ indicates asymptotic independence, although the 95\% bootstrap confidence interval includes values above 0.5, meaning that firm conclusions about the asymptotic dependence class are difficult to draw; this further highlights the need for models that can incorporate both scenarios. Whilst asymptotic independence is indicated, the value of $\delta$ suggests that that our model is more suited than a simple Gaussian model. To reinforce this, we also fit a Gaussian model, using the same censored likelihood scheme, with results reported on the right side of Table~\ref{tab:Hs}. Although the Gaussian model is nested within the model we fit, testing is non-standard as it occurs at the boundary of the parameter space, i.e., for $\delta=0$. The maximized log-likelihood for our model was 62 units higher than for the Gaussian model, representing a clear improvement, although interpretation is difficult as there is no explicit accounting for temporal dependence in the likelihood.

\begin{table}
\centering
\caption{MLEs, standard deviations (SD) of bootstrap replications, and approximate 95\% confidence intervals (CI) for parameters of our model~\eqref{eq:model} (left) and the Gaussian process (right). Figures given to the longer of two decimal places or significant figures.}
\label{tab:Hs}
 \begin{tabular}{llll|lll}\hline
  & MLE & SD & 95\% CI& MLE & SD & 95\% CI\\\hline
 $\delta$ & 0.46 & 0.039 & (0.36,0.54)  & -- & -- & --\\
 $\lambda$ & 3.19 & 0.26 & (2.60,3.71) & 3.84 & 0.17 & (3.62,4.26) \\ 
  $\nu$ & 1.98 & 0.0033 & (1.97,1.98)  & 1.97 & 0.0043 & (1.96,1.98)\\ \hline
\end{tabular}
\end{table}

To assess the fit of the model, we consider two diagnostics. Figure~\ref{fig:ChiHS20DFit} displays the fitted value of $\chi_u$, as defined in~\eqref{eq:chiu:intro}, for the subset of sites included in the model fit (left panel) and the subset of sites excluded from the fit (right panel). Although the model was fitted using censored likelihood above a $95\%$-quantile threshold, the fit looks good on the plotted range $u\in(0.9,1)$. The Gaussian model clearly underestimates the dependence.

\begin{figure}
 \centering
 \includegraphics[width=0.45\textwidth]{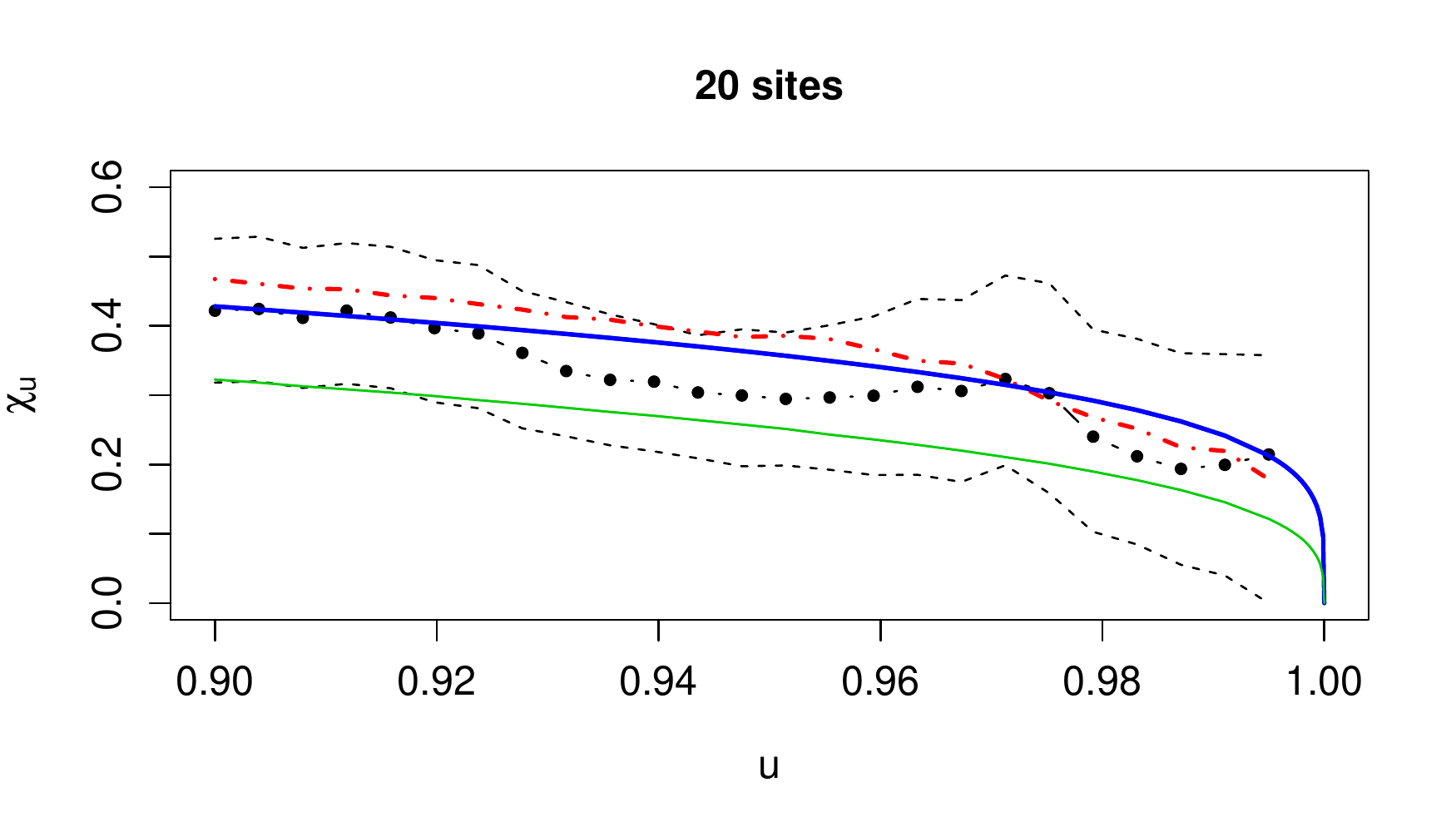}
 \includegraphics[width=0.45\textwidth]{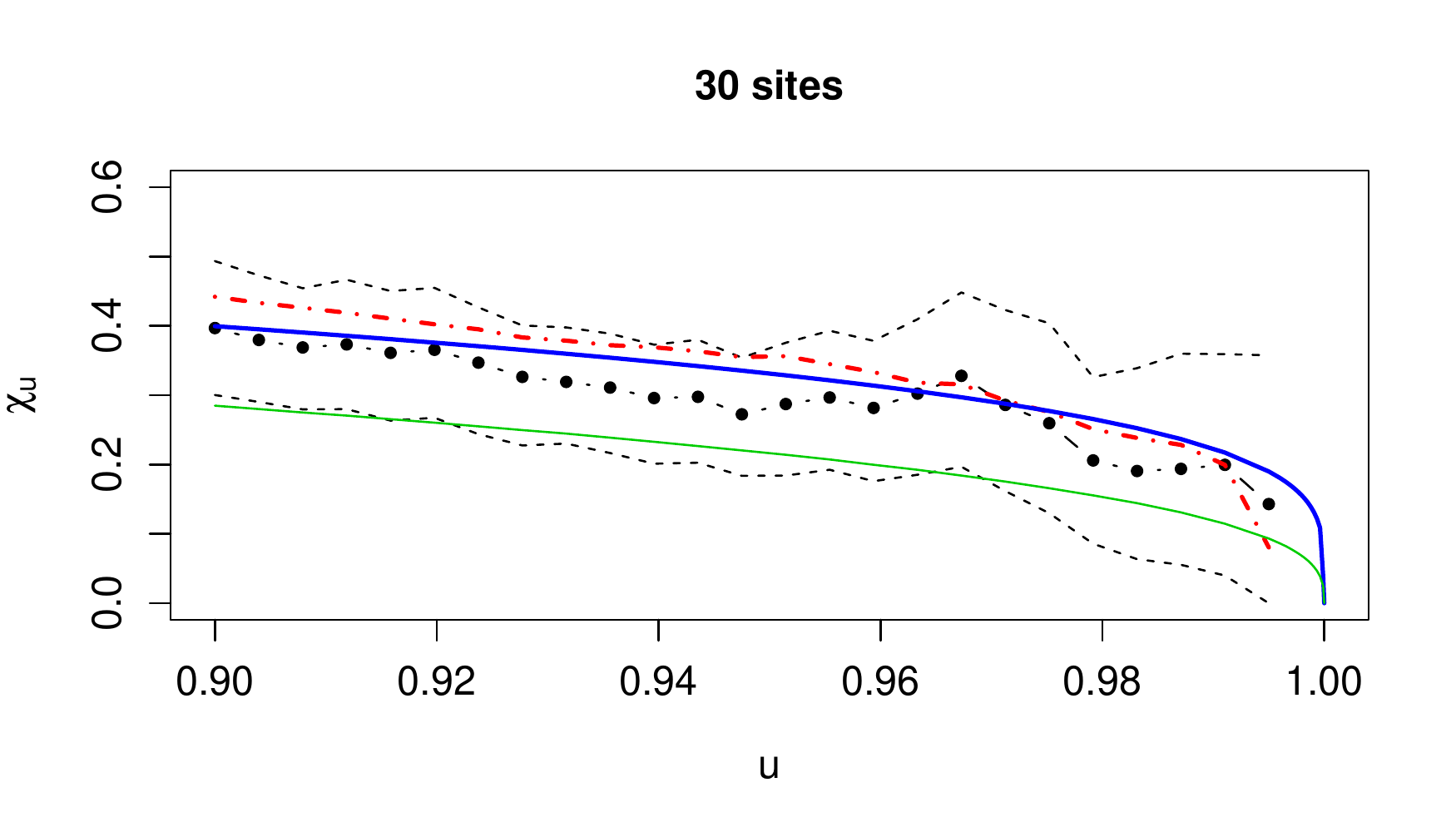}
 \caption{Estimates of $\chi_u$ for the hindcast wave height data. Central black dots: empirical estimate of $\chi_u$ from the temporally thinned data; dashed lines: approximate 95\% confidence intervals based on the stationary bootstrap procedure described in the text; dot-dash red line: empirical estimate of $\chi_u$ from all data; thick solid blue line: fit from our model; thin solid green line: fit from the Gaussian model. Left: data to which the model was fitted (from $20$ sites); right: data to which the model was not fitted (from $30$ sites).}\label{fig:ChiHS20DFit}
\end{figure}

The second diagnostic we consider is the distribution of the number of threshold exceedances, conditioning upon having at least one exceedance. The Supplementary Material contains histograms of the distribution from our data sample and from the fitted model, and suggests that the fitted model appears to capture this distribution quite well.

\subsection{Newlyn oceanographic data}
\label{sec:WaveSurge}
We fit a bivariate version of model~\eqref{eq:model}, as discussed in Example~\ref{sec:NonSpatial}, to the Newlyn oceanographic data analyzed in \citet{Wadsworthetal17} to illustrate an asymmetric construction, and to compare with the symmetric models fitted therein. The data, shown in Figure~\ref{fig:newlyn}, comprise $2894$ observations of wave height, surge and period, and we analyze them pairwise, transforming to uniformity again using the semiparametric transformation of \citet{ColesTawn91}. To generate an asymmetric model, we assume that the copula of $(W_1,W_2)^T$ is that of an inverted Dirichlet max-stable distribution (recall Example~\ref{ex:WinvertedMS}). The bivariate Dirichlet max-stable distribution \citep{ColesTawn91} has exponent function
\begin{align*}
 V(x_1,x_2) = \frac{1}{x_1}\left\{1-\mbox{Be}\left(\frac{\alpha x_1}{\alpha x_1 +\beta x_2};\alpha+1,\beta\right)\right\} + \frac{1}{x_2}\mbox{Be}\left(\frac{\alpha x_1}{\alpha x_1 + \beta x_2};\alpha,\beta+1\right),\qquad \alpha,\beta>0,
\end{align*}
where $\mbox{Be}(\cdot,a,b)$ is the Beta distribution function with shape parameters $a$ and $b$. The bivariate inverted max-stable distribution with Pareto margins has joint survivor function $\Pr(W_1>w_1,W_2>w_2) = \exp[-V\{(\log w_1)^{-1}, (\log w_2)^{-1}\}]$. To ensure consistency with the approach of \citet{Wadsworthetal17}, we use the censored likelihood described therein and at the end of \S\ref{sec:Likelihood} for both models. That is, we use the full density contribution when either variable is above a censoring threshold, which is set to the $95\%$-quantile in each margin. Table~\ref{tab:newlyn} gives the Akaike Information Criterion (AIC) for the model of~\citet{Wadsworthetal17} and our asymmetric model; improvements are seen for pairs involving wave period, which shows a more asymmetric dependence structure than height and surge. One limitation of this choice for $(W_1,W_2)^T$ is that it cannot exhibit negative dependence, and as such, the model is less flexible when it comes to accounting for dependence structures with weak asymptotic dependence (i.e., with small but positive $\chi$). This does not appear to be an issue for these asymptotically independent pairs, but alternative choices for $(W_1,W_2)^T$ such as the skew bivariate normal \citep{AzzaliniDallaValle96} could be used to overcome this.

\begin{table}
\centering
\caption{AIC for bivariate copula fits to the Newlyn data, using the symmetric model of \citet{Wadsworthetal17} (first row) and using our model with the asymmetric inverted Dirichlet model (second row).}
\label{tab:newlyn}
 \begin{tabular}{llll}\hline
  & Height--Surge & Period--Surge & Height--Period\\\hline
 AIC WTDE & 264.1 & 515.4 & 225.7\\
 AIC asymmetric & 267.3 & 493.8 & 181.6\\ \hline
\end{tabular}
\end{table}

\begin{figure}
 \centering
 \includegraphics[width=0.3\textwidth]{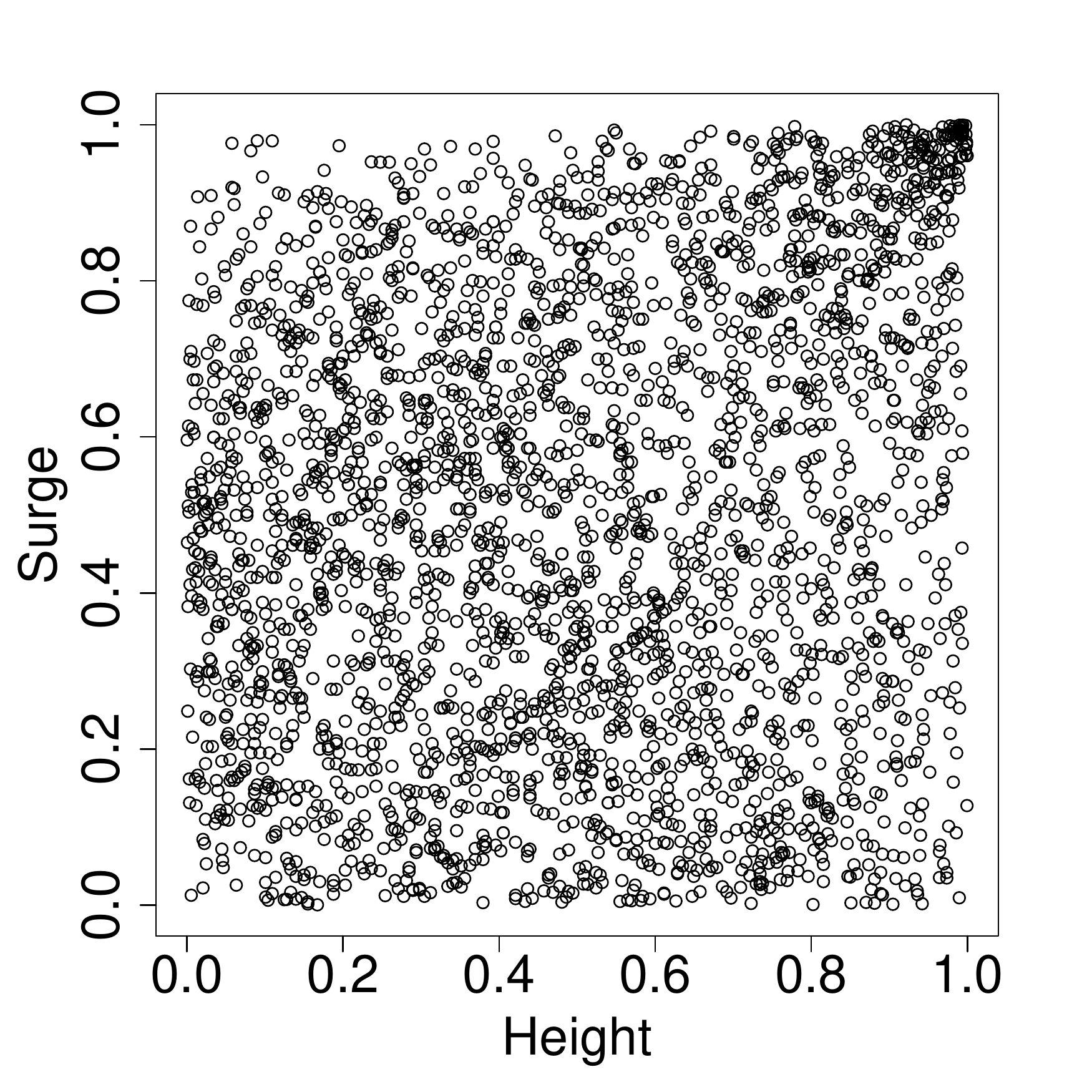}
 \includegraphics[width=0.3\textwidth]{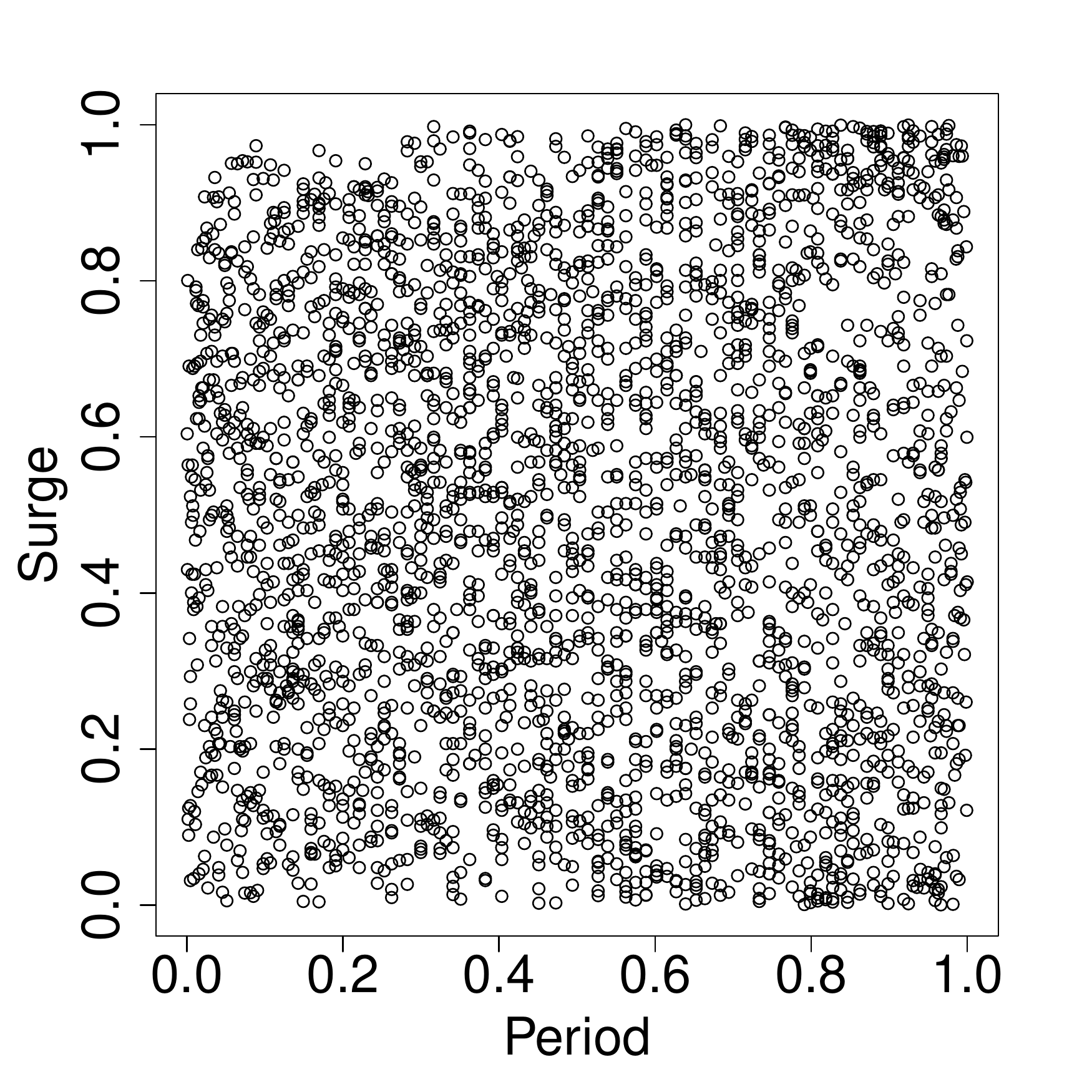}
 \includegraphics[width=0.3\textwidth]{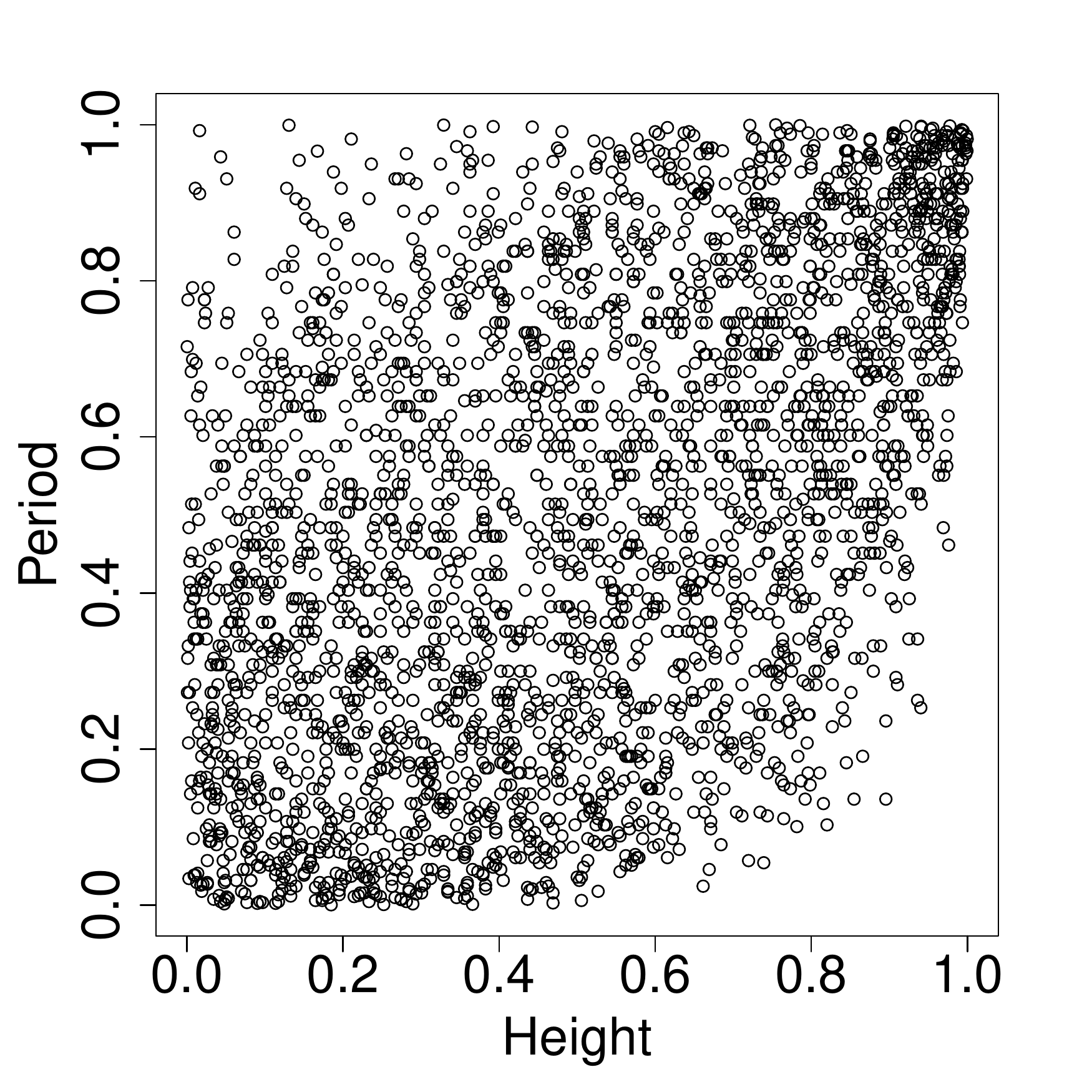}
 \caption{Newlyn wave data on approximate standard uniform margins. From left to right: Height-Surge, Period-Surge, Height-Period.} \label{fig:newlyn}
\end{figure}

\section{Discussion}
\label{sec:Discussion}
Motivated by deficiencies in existing frameworks for modeling spatial extremes, we presented a parsimonious model that is able to capture the sub-asymptotic dependence behavior of spatial processes. Importantly, both extremal dependence classes are captured, with rich structures within each class, and a smooth transition between paradigms at the interior of the parameter space. 

Inference for model~\eqref{eq:model} is feasible in moderate dimensions, but computationally intensive when $W$ has a Gaussian copula, owing to the need to integrate expressions involving a multivariate Gaussian distribution function. However, new quasi-Monte Carlo algorithms, such as those used by \citet{deFondevilleDavison16}, and the associated R package \texttt{mvPot}, have the potential to increase scalability; their code was used to speed up the bootstrap procedure in \S~\ref{sec:WaveHeight}. With the exception of the specific model used in \citet{deFondevilleDavison16}, truly high-dimensional inference for spatial extreme-value models has yet to be achieved, and our model is competitive with others in this aspect.

There are two notable limitations of the model~\eqref{eq:model}. The first of these is that for $\delta > 1/3$, $\eta_X(h)>1/2$ indicating a persistence of positive extremal association even as the lag $h\to\infty$. This is, however, a common problem with many models for spatial extremes. Consequently, the model is more suitable for smaller spatial regions or data for which this is not an issue. The second limitation concerns the link between $\delta$ and the limiting value of $\chi(h)$ for $\delta>1/2$. Since $W(\bs)\geq1$ we have $\min(W_j,W_k)\geq 1$ and consequently from~\eqref{eq:chiW}, $\chi(h) \geq (2\delta-1)/\delta$. As can be observed from Figure~\ref{fig:chi.eta} and equation~\eqref{eq:chiconv}, for values of $\delta$ near 1, the process~\eqref{eq:model} behaves similarly to a generalized Pareto process. However, model~\eqref{eq:model} would be unable to capture a weakly dependent generalized Pareto process, i.e., one for which $\chi_u(h)$ is constant in $u$ but its limit $\chi(h)$ is small and positive. In practice however, this is not likely to be restrictive, since in our experience almost all environmental datasets display a decreasing $\chi_u$ function.

 \subsection*{Acknowledgements} We thank Philip Jonathan of Shell Research for the wave height data analyzed in \S\ref{sec:WaveHeight}, and Rapha\"{e}l de Fondeville for helpful discussions and code for multivariate Gaussian computation. J. Wadsworth gratefully acknowledges funding from EPSRC fellowship grant EP/P002838/1.

\paragraph{Code and data} Code for fitting the models described is available as Supplementary Material and at {\tt http://www.lancaster.ac.uk/$\sim$wadswojl/SpatialADAI}. The NEXTRA hindcast data analyzed in \S~\ref{sec:WaveHeight} are subject to restrictions. Access may be granted for academic purposes by members of the North European Storm Study User Group (NUG); requests can be made using the details at {\tt http://www.oceanweather.com/metocean/next/index.html}. The Newlyn wave data analyzed in \S~\ref{sec:WaveSurge} are available as Supplementary Material.

\appendix
\section{Proofs}
\label{app:proofs}
\begin{proof}[Proof of Proposition~\ref{prop:joint}]
\begin{align*}
 \Pr(X_j>x, X_k>x) &= \Pr(W_j > x^{1/(1-\delta)}R^{-\delta/(1-\delta)}, W_k > x^{1/(1-\delta)}R^{-\delta/(1-\delta)})\\
 &=\Pr(W_j > S, W_k > S),
\end{align*}
where $S= x^{1/(1-\delta)}R^{-\delta/(1-\delta)}$, so that $S$ has support $(0,x^{1/(1-\delta)})$, and Lebesgue density $f_S(s) = \frac{1-\delta}{\delta}s^{(1-\delta)/\delta-1}x^{-1/\delta}$ on this interval. Using assumption~\eqref{eq:Whrv}, we have
\begin{align*}
 \Pr(W_j > S, W_k > S) &= \frac{1-\delta}{\delta} x^{-1/\delta} \int_0^1 s^{(1-\delta)/\delta-1}\,{\rm ds} +\frac{1-\delta}{\delta} x^{-1/\delta}  \int_{1}^{x^{1/(1-\delta)}} L_W(s) s^{(1-\delta)/\delta-1/\etaW-1} \,{\rm ds}\\
 &=x^{-1/\delta} + \frac{1-\delta}{\delta} x^{-1/\delta} \int_{1}^{x^{1/(1-\delta)}} L_W(s) s^{(1-\delta)/\delta-1/\etaW-1} \,{\rm ds}.
\end{align*}

Consider the behavior of $\int_{1}^{x^{1/(1-\delta)}} L_W(s) s^{(1-\delta)/\delta-1/\etaW-1} \,{\rm ds}$, which is convergent since we have a well defined probability. We will apply Karamata's Theorem \citep[][Theorem 2.1]{Resnick06} and so distinguish between the cases when the index of regular variation is $\gtreqless -1$. The notation $g\in RV_{\rho}$ denotes that a function $g$ is regularly varying at infinity with index $\rho\in\mathbb{R}$.

\paragraph*{Case 1: $(1-\delta)/\delta-1/\etaW-1 \geq -1$} i.e., $\etaW \geq \delta/(1-\delta)$. By Karamata's Theorem $\int_1^{x} L(s)s^{\theta}\, ds \in RV_{\theta+1}$ when $\theta \geq -1$. Thus 
\begin{align*}
 \int_{1}^{x^{1/(1-\delta)}} L_W(s) s^{(1-\delta)/\delta-1/\etaW-1} \,{\rm ds} = \tilde{L}(x)x^{1/\delta-1/\{\etaW(1-\delta)\}},
\end{align*}
where $\tilde{L}$ is a new SV function, using also a result on composition of regularly varying functions \citep[][Prop.\ 2.6 (iv)]{Resnick06}.
\medskip\noindent
Overall in Case~1 we thus have
\begin{align*}
  \Pr(X_j>x, X_k>x) &= L(x) x^{-1/\{\etaW(1-\delta)\}},
\end{align*}
for some slowly varying function $L$, noting that terms of order $x^{-1/\delta}$ are absorbed into $L$ when $\etaW>\delta/(1-\delta)$.


\paragraph*{Case 2: $(1-\delta)/\delta-1/\etaW-1 < -1$} i.e., $\etaW < \delta/(1-\delta)$. By Karamata's Theorem $\int_{x}^{\infty} L(s)s^{\theta}\, ds \in RV_{\theta+1}$ when $\theta < -1$. We have
\begin{align}
 \int_{1}^{x^{1/(1-\delta)}} L_W(s) s^{(1-\delta)/\delta-1/\etaW-1} \,{\rm ds} = \int_{1}^{\infty} L_W(s) s^{(1-\delta)/\delta-1/\etaW-1} \,{\rm ds} - \int_{x^{1/(1-\delta)}}^{\infty} L_W(s) s^{(1-\delta)/\delta-1/\etaW-1} \,{\rm ds}, \label{eq:case2expansion}
\end{align}
and so the second term on the right-hand side the is regularly varying of index $1/\delta-1/\{\etaW(1-\delta)\}$.
\medskip
The first term on the right-hand side of expression~\eqref{eq:case2expansion} is established by noting that
\begin{align*}
 \E\{\min(W_j,W_k)^{(1-\delta)/\delta}\} &= \int_0^{\infty} \Pr\{\min(W_j,W_k)^{(1-\delta)/\delta} > t\} \,{\rm dt}\\
 &=1+\int_1^{\infty} L_W\{t^{\delta/(1-\delta)}\}t^{-\delta/\{\etaW(1-\delta)\}} \,{\rm dt} = 1+\frac{1-\delta}{\delta}\int_{1}^{\infty} L_W(s) s^{(1-\delta)/\delta-1/\etaW-1} \,{\rm ds}.
\end{align*}
\medskip
Overall in Case~2 we thus have
\begin{align*}
  \Pr(X_j>x, X_k>x) &= \E\{\min(W_j,W_k)^{(1-\delta)/\delta}\}x^{-1/\delta} - L(x)x^{-1/\{\etaW(1-\delta)\}}\\
  &=\E\{\min(W_j,W_k)^{(1-\delta)/\delta}\}x^{-1/\delta}\{1+o(1)\},
\end{align*}
since $\etaW(1-\delta)<\delta$.
\end{proof}

\begin{proof}[Proof of Corollary~\ref{cor:joint}]
Since $X$ has common margins and upper endpoint infinity, the extremal dependence class is determined by the limit
\begin{align*}
\chi= \lim_{x\to \infty} \frac{\Pr(X_j>x, X_k>x)}{\Pr(X_j>x)}.
\end{align*}

\paragraph{If $\delta>1/2$:} Then $\Pr(X_j>x)\sim \frac{\delta}{2\delta-1} x^{-1/\delta}$. We have $\delta/(1-\delta)>1$ so we must be in Case~2, and 
\begin{align*}
 \chi =  \E\{\min(W_j,W_k)^{(1-\delta)/\delta}\} \frac{2\delta-1}{\delta} >0,
\end{align*}
with expression~\eqref{eq:chiW} following as 
\begin{align}
 \E\{W_j^{(1-\delta)/\delta}\} = \int_{1}^{\infty} w^{(1-\delta)/\delta-2} \,{\rm d}w = \frac{\delta}{2\delta-1}. \label{eq:EWa}
\end{align}

\paragraph{If $\delta=1/2$:}
Consider the relation~\eqref{eq:etaX}, and note that since $\Pr(X_j>x)^{-1}$ is regularly varying with limit infinity, then the composition $L_X\{\Pr(X_j>x)^{-1}\}=:L^*(x)$ is slowly varying at infinity; cf.\ \citet[][Prop.\ 2.6(iv)]{Resnick06}. For $\delta=1/2$, we have
\[
 \Pr(X_j>x) = x^{-2}\{2\log(x) + 1\},\quad \Pr(X_j>x,X_k>x) = \E\{\min(W_j,W_k)\}x^{-2}\{1+o(1)\},
\]
since $\etaW<1$ by assumption, which puts us in Case 2. We thus have 
\[
 \Pr(X_j>x,X_k>x) = L^*(x)\Pr(X_j>x),
\]
so that $\etaX=1$ and
\[
 L^*(x) \sim \frac{\E\{\min(W_j,W_k)\}}{\{2\log(x) + 1\}} \to 0,~~x\to\infty.
\]

\paragraph{If $\delta<1/2$} Then $\Pr(X_j>x)\sim \frac{1-\delta}{1-2\delta} x^{-1/(1-\delta)}$. If $\etaW \geq \delta/(1-\delta)$ then we are in Case~1 and the survivor function is $L(x)x^{-1/\{\etaW(1-\delta)\}}$. Otherwise if $\etaW < \delta/(1-\delta)$ we are in Case~2 and the survivor function decays like $x^{-1/\delta}$. In both cases this leads to $\chi=0$ with coefficient of tail dependence, 
\begin{align*}
 \etaX &=\begin{cases}
           \delta/(1-\delta)& \mbox{if } \etaW < \delta/(1-\delta) \\
           \etaW & \mbox{otherwise}.
          \end{cases}
\end{align*}

\end{proof}

During the proofs of Propositions~\ref{prop:chiV} and~\ref{prop:chiconv}, we will need results on the quantile function $q(t):=F_{X}^{-1}\{1-1/t\}$, which we give in the following Lemma.
\begin{lem}
\label{lem:q}
 For $\delta>1/2$, the marginal quantile function $q(t)=F_{X}^{-1}(1-1/t)$ satisfies
 \begin{align*}
  q(t) = \left(\frac{\delta}{2\delta-1}\right)^{\delta}t^{\delta} \left[1-(1-\delta)\left(\frac{\delta}{2\delta-1}\right)^{(1-2\delta)/(1-\delta)}t^{(1-2\delta)/(1-\delta)}\{1+o(1)\}\right],~~~t\to\infty. 
 \end{align*}

\end{lem}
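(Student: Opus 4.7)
The plan is to invert the closed-form expression for the marginal survivor function \eqref{eq:marg} asymptotically. Since $\delta>1/2$ implies $1/(1-\delta) > 1/\delta$, the second term in $1-F_X(x) = \frac{\delta}{2\delta-1}x^{-1/\delta} - \frac{1-\delta}{2\delta-1}x^{-1/(1-\delta)}$ is of smaller order than the first as $x\to\infty$, which immediately suggests the leading order $q(t) \sim \left(\frac{\delta}{2\delta-1}\right)^{\delta} t^{\delta}$. The task is to compute the first correction term.

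First I would write the defining equation $1 - F_X(q(t)) = 1/t$ and make the ansatz $q(t) = \left(\frac{\delta}{2\delta-1}\right)^{\delta} t^{\delta}\{1+\varepsilon(t)\}$, where the leading-order analysis guarantees $\varepsilon(t)\to 0$. Substituting into \eqref{eq:marg} gives, after dividing through by $t^{-1}$,
\begin{align*}
 \{1+\varepsilon(t)\}^{-1/\delta} - \frac{1-\delta}{2\delta-1}\left(\frac{\delta}{2\delta-1}\right)^{-\delta/(1-\delta)} t^{(1-2\delta)/(1-\delta)}\{1+\varepsilon(t)\}^{-1/(1-\delta)} = 1.
\end{align*}
Since $2\delta-1>0$, the exponent $(1-2\delta)/(1-\delta)$ is negative, so the second term on the left is $o(1)$, confirming $\varepsilon(t)\to 0$.

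Next I would expand $\{1+\varepsilon(t)\}^{-1/\delta} = 1 - \varepsilon(t)/\delta + o(\varepsilon(t))$ and $\{1+\varepsilon(t)\}^{-1/(1-\delta)} = 1 + o(1)$ in the displayed equation and solve for $\varepsilon(t)$. Matching orders yields
\begin{align*}
 -\varepsilon(t)/\delta = \frac{1-\delta}{2\delta-1}\left(\frac{\delta}{2\delta-1}\right)^{-\delta/(1-\delta)} t^{(1-2\delta)/(1-\delta)}\{1+o(1)\},
\end{align*}
and a short algebraic simplification of the constant (writing $a=\delta/(2\delta-1)$ and using $-\delta\cdot a\cdot(1-\delta)/\delta\cdot a^{-\delta/(1-\delta)} = -(1-\delta)a^{(1-2\delta)/(1-\delta)}$) gives exactly the claimed correction
\begin{align*}
 \varepsilon(t) = -(1-\delta)\left(\frac{\delta}{2\delta-1}\right)^{(1-2\delta)/(1-\delta)} t^{(1-2\delta)/(1-\delta)}\{1+o(1)\}.
\end{align*}

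The calculation is mechanical and the only mild subtlety is keeping track of the rational exponents when simplifying the constant; I do not anticipate any real obstacle beyond bookkeeping. The self-consistency check that $\varepsilon(t)\to 0$ (ensuring the ansatz is legitimate) falls out automatically from $\delta>1/2$, which makes the exponent $(1-2\delta)/(1-\delta)$ strictly negative.
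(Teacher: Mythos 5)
Your proof is correct and follows essentially the same route as the paper: both invert the closed-form survivor function \eqref{eq:marg} asymptotically, first extracting the leading order $q(t)\sim\{\delta/(2\delta-1)\}^{\delta}t^{\delta}$ from the fact that the $x^{-1/(1-\delta)}$ term is negligible for $\delta>1/2$, and then obtaining the correction by substituting back (the paper iterates the exact relation, you expand an explicit ansatz $1+\varepsilon(t)$ --- the same two-term inversion). The constant bookkeeping in your final simplification checks out.
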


\begin{proof}
 
The quantile function is obtained by solving $1-F_{X}\{q(t)\} = t^{-1}$, which leads to
\begin{align*}
 \frac{\delta}{2\delta-1}q(t)^{-1/\delta}\left\{1-\frac{1-\delta}{\delta}q(t)^{(1-2\delta)/\{\delta(1-\delta)\}}\right\} &= t^{-1}
 \end{align*}
 and thus
 \begin{align}
 q(t)\left\{1-\frac{1-\delta}{\delta}q(t)^{(1-2\delta)/\{\delta(1-\delta)\}}\right\}^{-\delta} &= \left( \frac{\delta}{2\delta-1}\right)^{\delta}t^{\delta}. \label{eq:q}
 \end{align}
 Since $q(t)\to\infty$ as $t\to\infty$, and $(1-2\delta)/\{\delta(1-\delta)\}<0$ for $\delta>1/2$, expression~\eqref{eq:q} leads to
\begin{align*} 
 q(t) &=  \left(\frac{\delta}{2\delta-1}\right)^{\delta}t^{\delta}\{1+o(1)\},\quad t\to\infty,
\end{align*}
which can be fed back into~\eqref{eq:q} to give the result claimed.
\end{proof}

\begin{proof}[Proof of Proposition~\ref{prop:chiV}]
 
When $\delta>1/2$, the exponent function $V(x_1,\ldots,x_d)$ is obtained from the limit
\begin{align*}
 V(x_1,\ldots,x_d) = \lim_{t\to\infty} t(1-\Pr[X_1 \leq F_{X}^{-1}\{1-(tx_1)^{-1}\},\ldots,X_1 \leq F_{X}^{-1}\{1-(tx_d)^{-1}\}]).
\end{align*}
Using Lemma~\ref{lem:q}, we have $q(tx)= \left(\frac{\delta}{2\delta-1}\right)^{\delta}(tx)^{\delta}\{1+o(1)\}$, and so
\begin{align*}
1-\Pr\{X_1 \leq q(tx_1),\ldots,X_1 \leq q(tx_d)\} &= \Pr\left\{\max_{j=1,\ldots,d} \frac{X_j}{q(tx_j)} > 1\right\}\\
 &=\Pr\left[\max_{j=1,\ldots,d} \frac{R^{\delta}W_j^{1-\delta}}{\left(\frac{\delta}{2\delta-1}\right)^{\delta}(tx_j)^{\delta}\{1+o(1)\}} > 1\right]\\
 &= \int_{0}^{1} \Pr\left[\max_{j=1,\ldots,d} \frac{W_j^{(1-\delta)/\delta}}{\left(\frac{\delta}{2\delta-1}\right)x_j \{1+o(1)\}} > t u\right]\,{\rm du}\\
 &=\frac{1}{t}\int_{0}^{t} \Pr\left[\max_{j=1,\ldots,d} \frac{W_j^{(1-\delta)/\delta}}{\left(\frac{\delta}{2\delta-1}\right)x_j \{1+o(1)\}} > z\right]\,{\rm dz}.
\end{align*}
For sufficiently large $t$, an integrable function of the form $\Pr\left[K\max_{j=1,\ldots,d} \frac{W_j^{(1-\delta)/\delta}}{\left(\frac{\delta}{2\delta-1}\right)x_j} > z\right]$, $1<K<\infty$, dominates the integrand over $(0,\infty)$ and thus the above integral tends to 
\[
\int_{0}^{\infty} \Pr\left\{\max_{j=1,\ldots,d} \frac{W_j^{(1-\delta)/\delta}}{\left(\frac{\delta}{2\delta-1}\right)x_j} > z\right\}\,{\rm dz} = \E\left\{\max_{j=1,\ldots,d} \frac{W_j^{(1-\delta)/\delta}}{x_j}\right\}\left(\frac{2\delta-1}{\delta}\right),
\]
and hence
\begin{align*}
\lim_{t\to\infty} t[1-\Pr\{X_1 \leq q(tx_1),\ldots,X_d\leq q(tx_d)\}] &= \E\left\{\max_{j=1,\ldots,d} \frac{W_j^{(1-\delta)/\delta}}{x_j}\right\}\left(\frac{2\delta-1}{\delta}\right)\\
&=\E\left[\max_{j=1,\ldots,d} \frac{W_j^{(1-\delta)/\delta}}{\E\{W_j^{(1-\delta)/\delta}\}x_j}\right],
\end{align*}
the final line following by equation~\eqref{eq:EWa}.
\end{proof}

\begin{proof}[Proof of Proposition~\ref{prop:chiconv}]
The function 
\begin{align*}
\chi_u = \Pr\{F_X(X_j)>u \mid F_X(X_k)>u\} = \frac{\Pr\{X_j>F_X^{-1}(u),X_k>F_X^{-1}(u)\}}{1-u}.
\end{align*}
Lemma~\ref{lem:q} gives the behavior of $F_X^{-1}(u) = q\{(1-u)^{-1}\}$, whilst the proof of Proposition~\ref{prop:joint} provides $\Pr(X_j>x, X_k>x) = \E\{\min(W_j,W_k)^{(1-\delta)/\delta}\}x^{-1/\delta} - L(x)x^{-1/\{\etaW(1-\delta)\}}$, giving
\begin{align*}
 \frac{\Pr\{X_j>F_X^{-1}(u),X_k>F_X^{-1}(u)\}}{1-u} &= \E\{\min(W_j,W_k)^{(1-\delta)/\delta}\}\frac{F_X^{-1}(u)^{-1/\delta}}{1-u} - L\{F_X^{-1}(u)\}\frac{F_X^{-1}(u)^{-1/\{\etaW(1-\delta)\}}}{1-u}\\
 &= \chi \left[1+\frac{1-\delta}{\delta}\left(\frac{\delta}{2\delta-1}\right)^{(1-2\delta)/(1-\delta)}(1-u)^{(2\delta-1)/(1-\delta)}\{1+o(1)\}\right] \\
 & \qquad -L\{(1-u)^{-1}\}(1-u)^{\delta/\{\etaW(1-\delta)\}-1}\{1+o(1)\}.
\end{align*}
with constant terms absorbed in to $L$. Since $(2\delta-1)/(1-\delta)<\delta/\{\etaW(1-\delta)\}-1$ for $\etaW<1$, the result follows.

\end{proof}

\newpage
\section{Supplementary Material}
\subsection{Supporting information for Section~\ref{sec:InferenceSimulation}}

\begin{figure}[h]
 \centering
\includegraphics[width=0.8\linewidth]{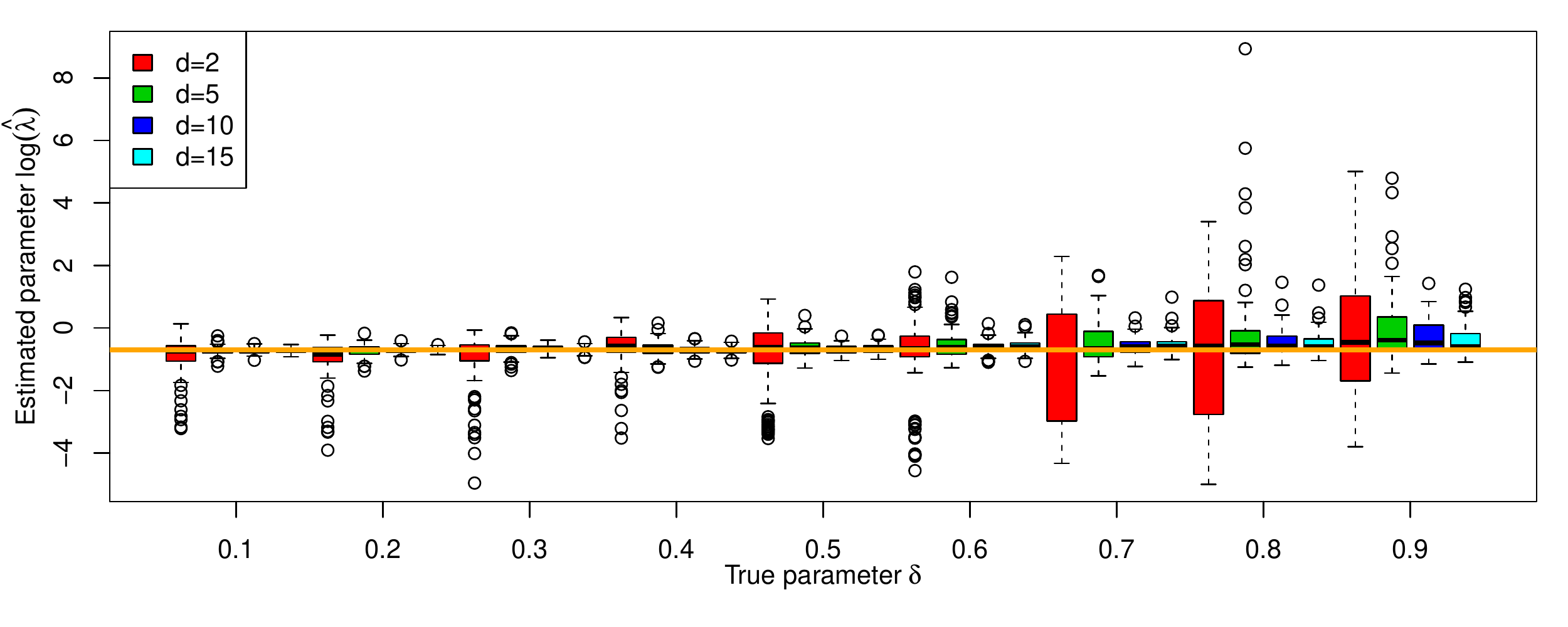}\\
\includegraphics[width=0.8\linewidth]{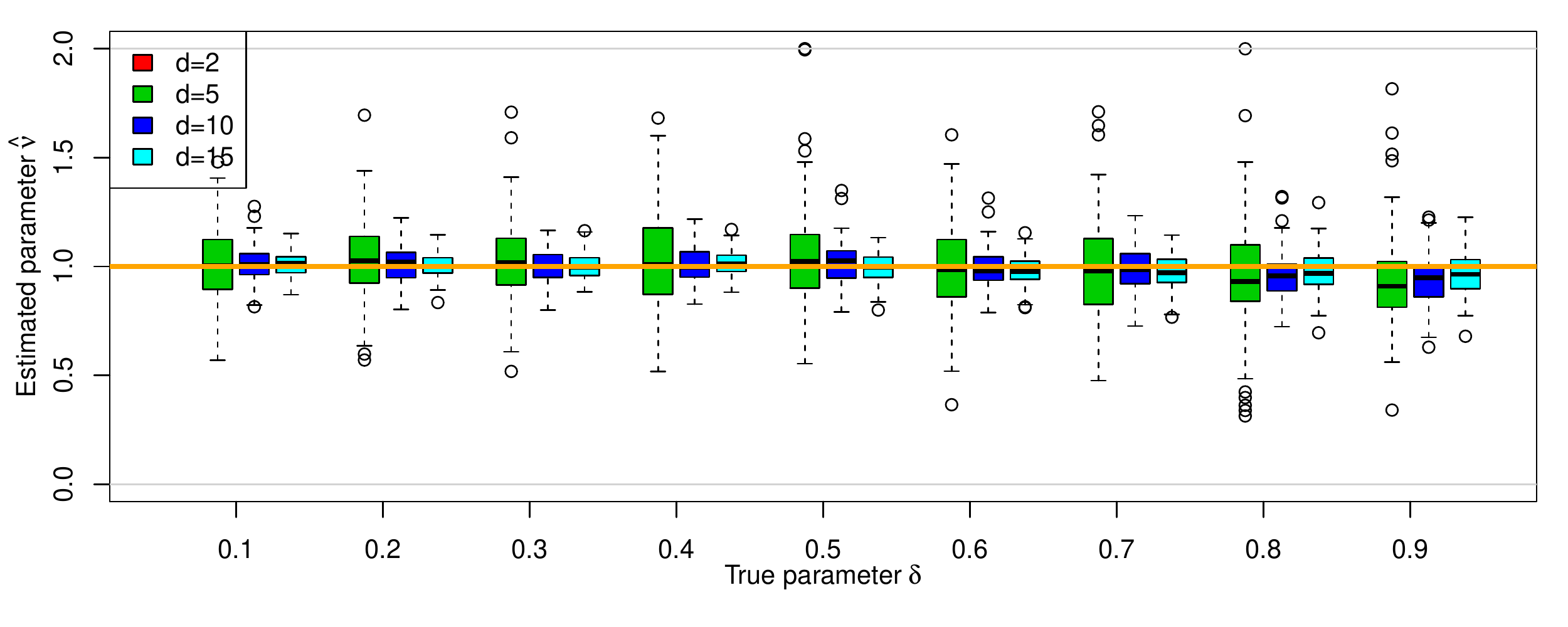}
\caption{Boxplots for the MLEs $\log(\hat{\lambda})$ and $\hat{\nu}$, estimated concurrently with $\hat{\delta}$ as in Figure~\ref{fig:results1} of Section~\ref{sec:Simulation}.}
\end{figure}

\pagebreak
\subsection{Supporting information for Section~\ref{sec:Oceanographic}}
\subsubsection{Bootstrap procedure}
To demonstrate that the stationary bootstrap procedure described in \S\ref{sec:WaveHeight} adequately reproduces the temporal dependence in the extremes, we consider a spatial extension of the \emph{extremal index} for univariate time series. For a stationary time series $\{X_t\}$, the extremal index, $\theta \in[0,1]$, can be defined as
\begin{align*}
\theta = \lim_{n\to\infty} \Pr(X_2\leq u_n,\ldots,X_{p_n}\leq u_n | X_1>u_n),
\end{align*}
where $p_n=o(n)$ and $u_n$ is a series such that $n\{1-F(u_n)\}\to\tau \in(0,\infty)$. The extremal index describes the degree of temporal clustering of extremes, with $1/\theta$ the limiting mean cluster size. A popular estimator for $\theta$ is the so-called Runs Estimator \citep{SmithWeissman94}. The estimate is formed by taking the reciprocal of the mean cluster size, whereby threshold exceedances are determined to be part of different clusters (the same cluster) if they are separated by a run of at least $m$ (fewer than $m$) consecutive non-exceedances. 

In our application we have a time series of spatial processes $\{X_t(s)\}$, which, as we consider winter months only, may reasonably be deemed stationary. In analogy to the univariate case, we define clusters of spatial threshold exceedances as follows. A realization of the process is deemed to be a ``threshold exceedance'' if the observation at any site exceeds a given threshold. Clusters are then defined as sequences of threshold exceedances separated by a run of at least $m$ non-exceedances, and $\theta$ as the reciprocal mean cluster size. Figure~\ref{fig:Theta} displays a histogram of estimated $\theta$s, using a value of $m=1$, from 200 bootstrap samples, along with that from the original dataset of 50 sites temporally thinned to one observation per day. The threshold value used was the 95\%-quantile, as in the model fit. The agreement between the original and bootstrap samples indicates that the temporal structure of the extremes is adequately reproduced.
\begin{figure}[h]
 \centering
\includegraphics[width=0.6\linewidth]{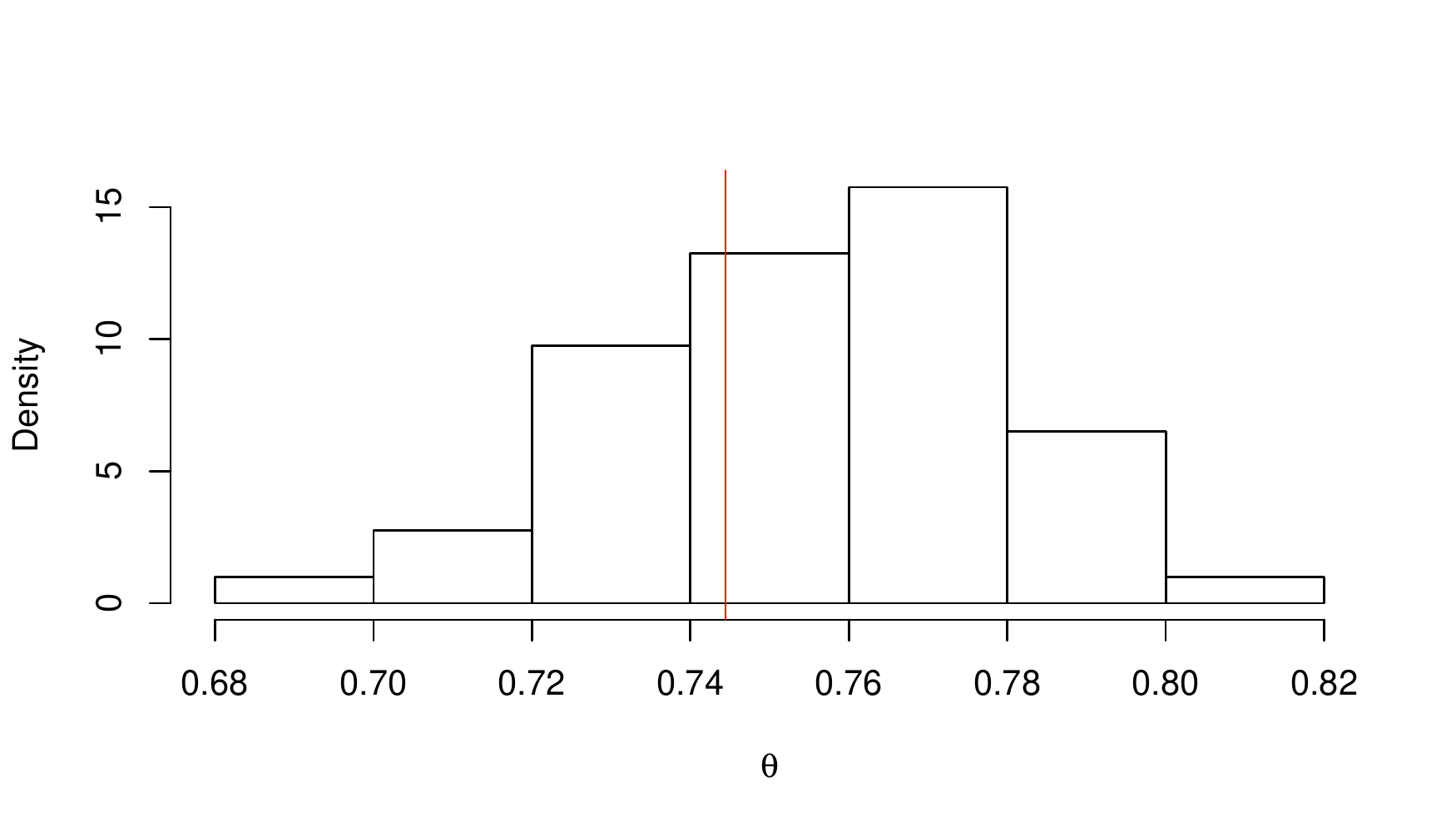}
\caption{Estimates of the extremal index from the time series of spatial processes using the stationary bootstrap sampling procedure described in \S\ref{sec:WaveHeight}. The vertical line is the value from the original sample.}
\label{fig:Theta}
\end{figure}

\subsubsection{Additional model fit diagnostics}
\begin{figure}[h]
 \centering
\includegraphics[width=0.45\linewidth]{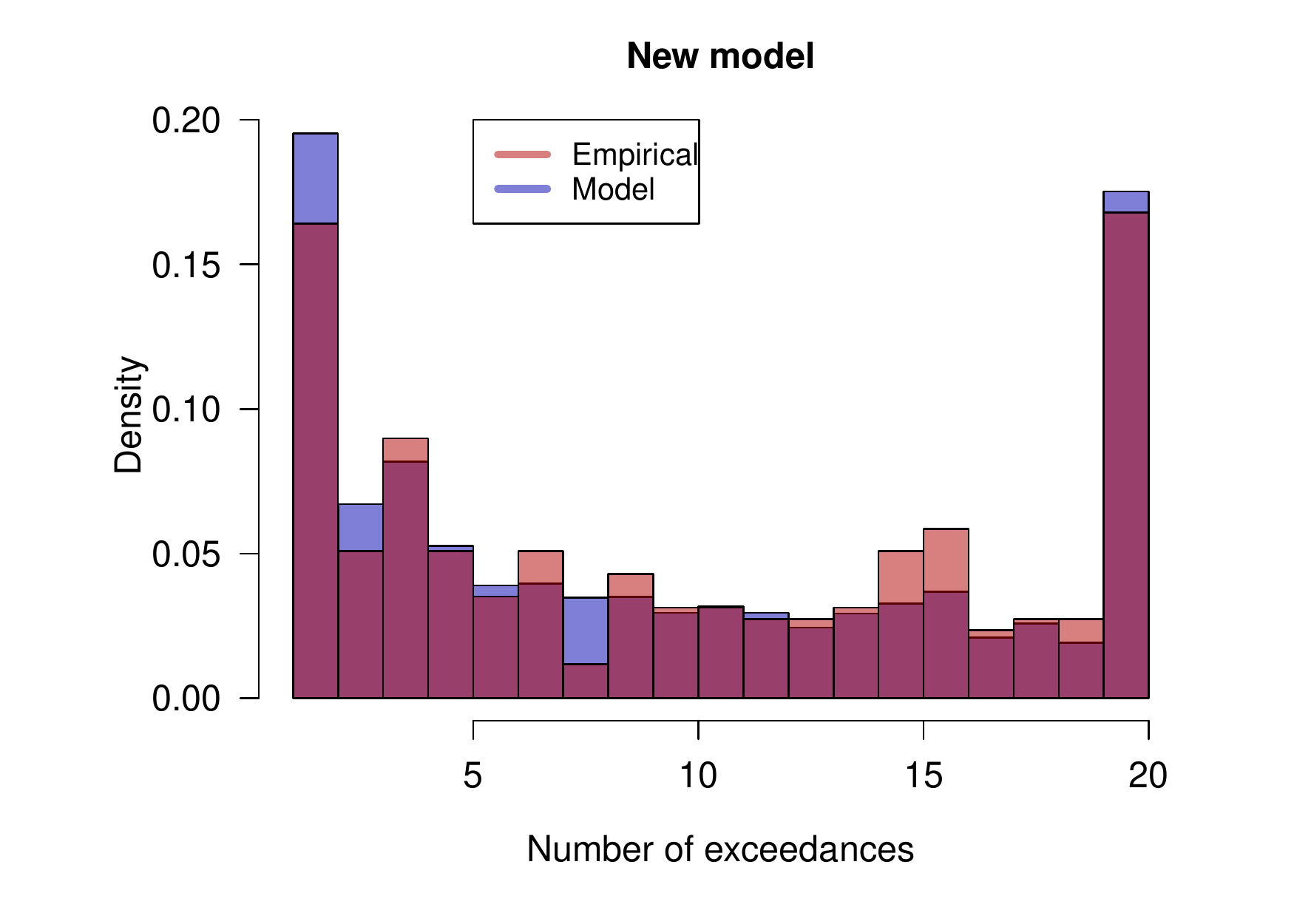}
\includegraphics[width=0.45\linewidth]{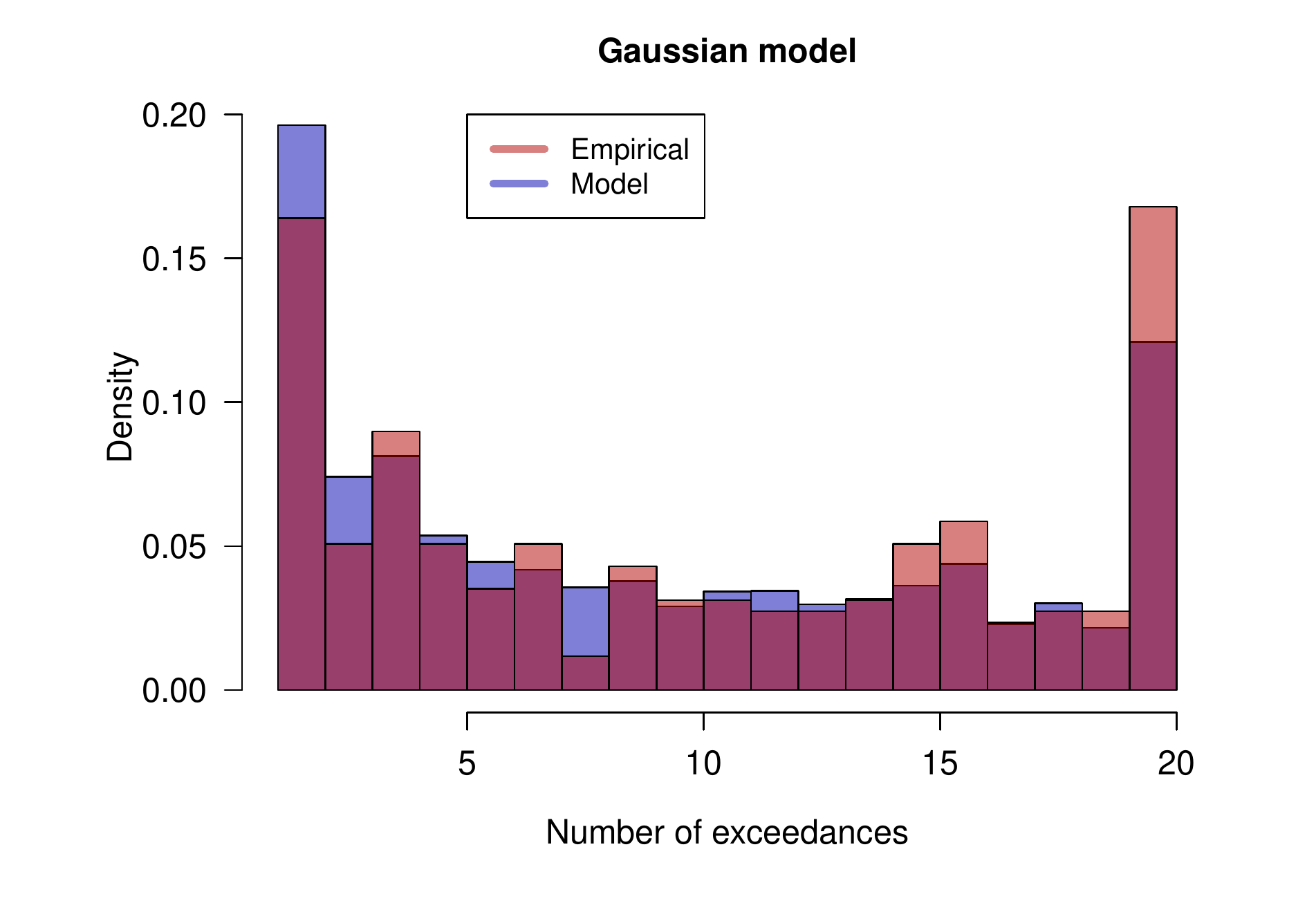}\\
\includegraphics[width=0.45\linewidth]{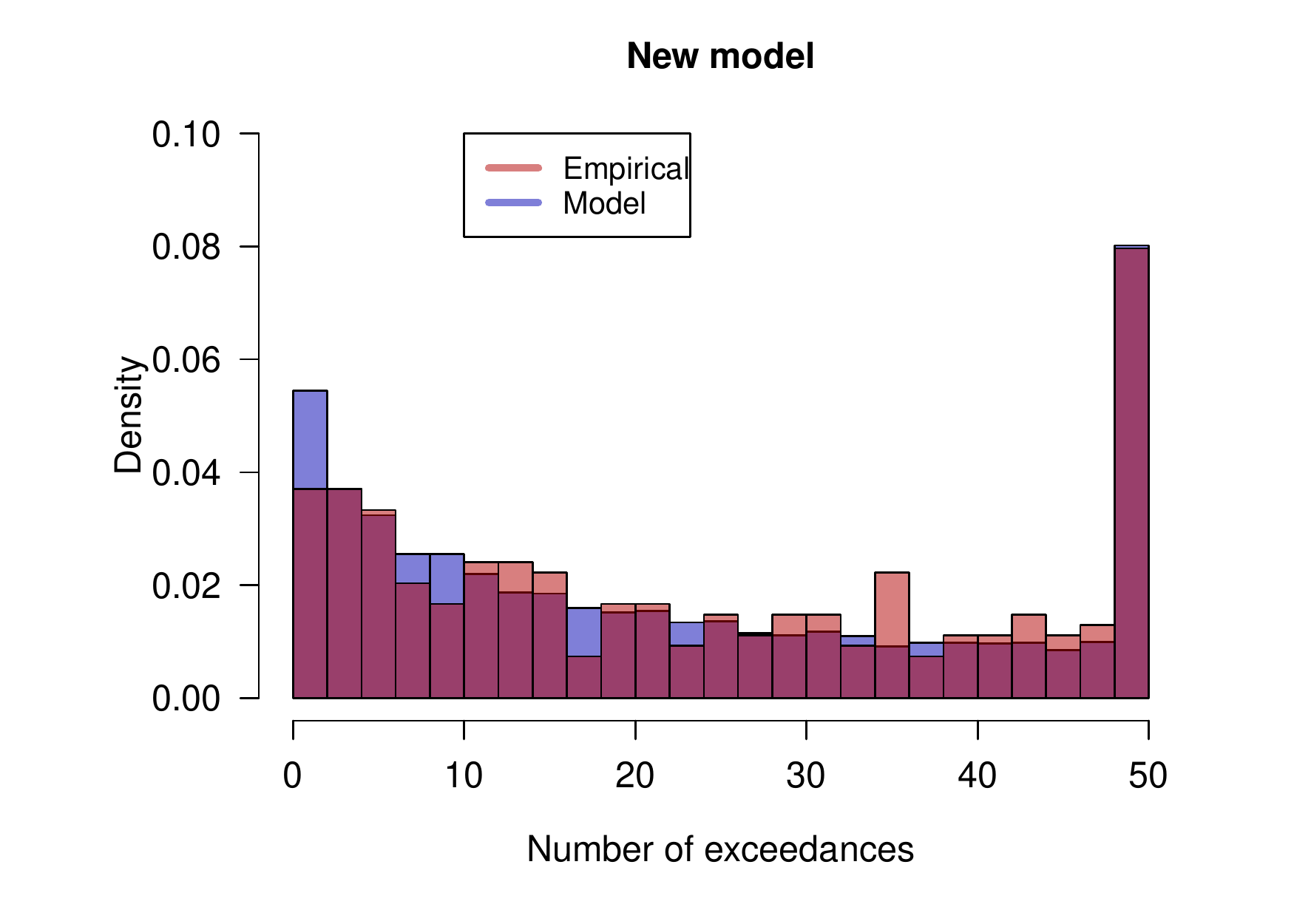}
\includegraphics[width=0.45\linewidth]{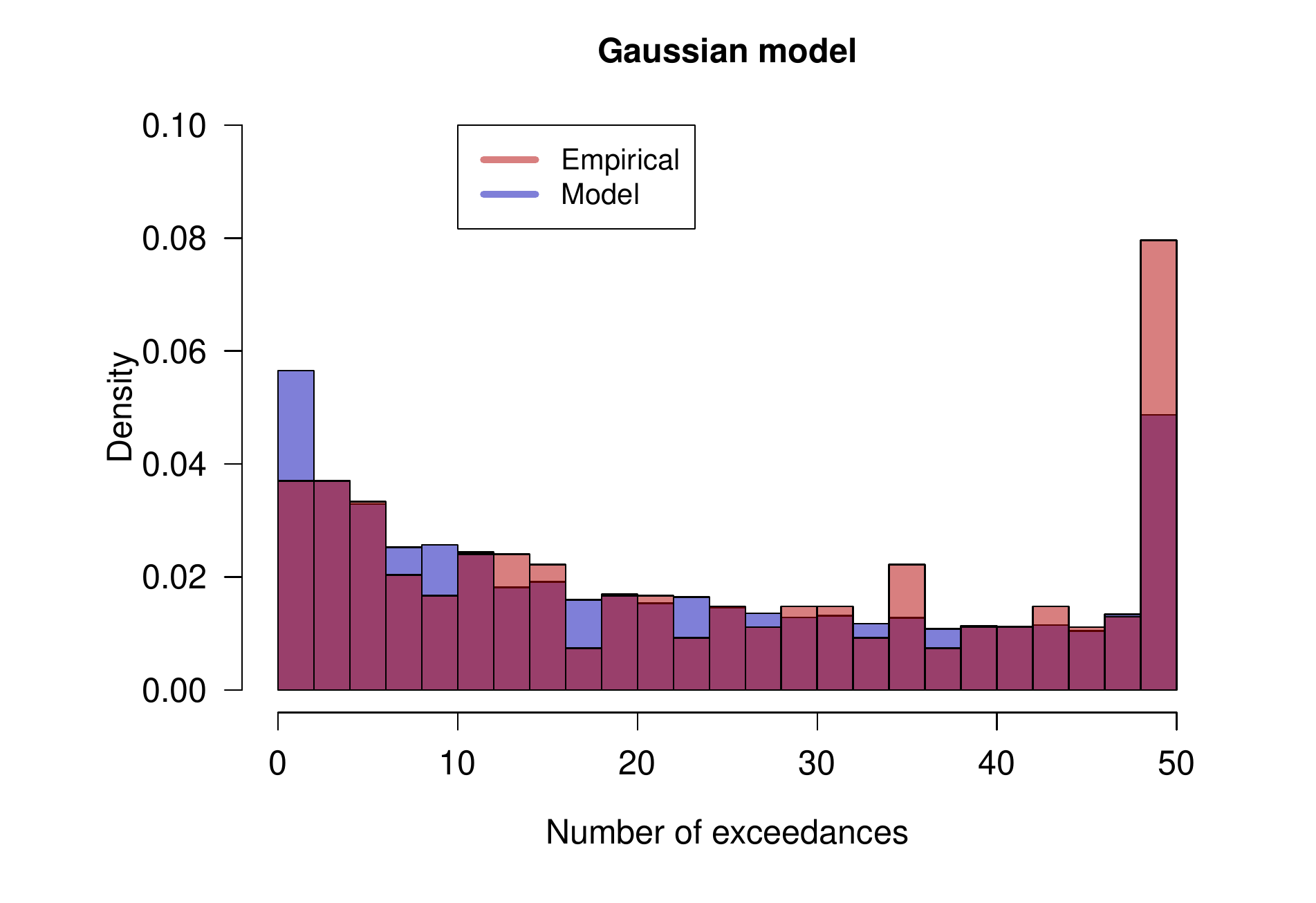}
\caption{Distribution of the number of exceedances, given at least one exceedance of the $95\%$-quantile threshold. These histograms are based on the data at the $20$ sites used for fitting the model (top) or all $50$ sites (bottom). Model-based quantities are calculated for our new model (left) and the Gaussian model (right) by simulating $10^5$ values from the fitted dependence models. }\label{fig:Hist}
\end{figure}

\baselineskip 20pt

\bibliographystyle{apalike}
\bibliography{SpatialADAIBib.bib}

\baselineskip 10pt

\end{document}